\documentclass[11pt,draftcls,onecolumn]{IEEEtran}
\usepackage{amsmath,amssymb,eucal,graphicx}
\usepackage{epsfig}
\usepackage{exscale}
\usepackage{latexsym}
\usepackage{verbatim}
\usepackage{amsmath}
\usepackage{amsfonts}
\usepackage{amssymb}
\usepackage{graphicx}%
\setcounter{MaxMatrixCols}{30}

\usepackage[usenames]{color}
\usepackage[normalem]{ulem}

\setlength\unitlength{1mm}


\newfont{\bbb}{msbm10 scaled 500}

\newfont{\bb}{msbm10 scaled 1100}
\newcommand{\CC}{\mbox{\bb C}}


\newcommand{\gv}{{\bf g}}
\newcommand{\hv}{{\bf h}}

\newcommand{\nv}{{\bf n}}

\newcommand{\uv}{{\bf u}}

\newcommand{\vv}{{\bf v}}
\newcommand{\xv}{{\bf x}}
\newcommand{\yv}{{\bf y}}
\newcommand{\zv}{{\bf z}}
\newcommand{\zerov}{{\bf 0}}


\newcommand{\Am}{{\bf A}}
\newcommand{\Bm}{{\bf B}}

\newcommand{\Gm}{{\bf G}}
\newcommand{\Hm}{{\bf H}}
\newcommand{\Id}{{\bf I}}

\newcommand{\Km}{{\bf K}}

\newcommand{\Sm}{{\bf S}}

\newcommand{\Um}{{\bf U}}

\newcommand{\Vm}{{\bf V}}


\newcommand{\Cc}{{\cal C}}

\newcommand{\Fc}{{\cal F}}

\newcommand{\Kc}{{\cal K}}
\newcommand{\Lc}{{\cal L}}

\newcommand{\Nc}{{\cal N}}

\newcommand{\Rc}{{\cal R}}

\newcommand{\Tc}{{\cal T}}

\newcommand{\Wc}{{\cal W}}


\newcommand{\nuv}{\hbox{\boldmath$\nu$}}

\newcommand{\phiv}{\hbox{\boldmath$\phi$}}

\newcommand{\Gammam}{\hbox{\boldmath$\Gamma$}}
\newcommand{\Lambdam}{\hbox{\boldmath$\Lambda$}}

\newcommand{\Psim}{\hbox{\boldmath$\Psi$}}


\renewcommand{\det}{{\hbox{det}}}
\newcommand{\trace}{{\hbox{tr}}}
\newcommand{\rank}{{\hbox{rank}}}

\newcommand{\SNR}{{\sf SNR}}

\newcommand{\eqdef}{\stackrel{\Delta}{=}}

\newcommand{\cov}{{\hbox{cov}}}

\newtheorem{theorem}{Theorem}
\newtheorem{definition}{Definition}
\newtheorem{lemma}{Lemma}
\newtheorem{cor}{Corollary}[section]

\newtheorem{remark}{\indent \bf Remark}[section] 
\newtheorem{property}{Property}[section]
\def\BibTeX{{\rm B\kern-.05em{\sc i\kern-.025em b}\kern-.08em
T\kern-.1667em\lower.7ex\hbox{E}\kern-.125emX}}

\begin{document}


\title{Secured Communication over Frequency-Selective Fading Channels: \\
a practical Vandermonde precoding\footnote{The material in this paper was partially presented at IEEE 19th International Symposium on Personal, Indoor and Mobile Radio Communications (PIMRC), Cannes, France,  Sept. 2008. }}

\author{\authorblockN{Mari Kobayashi, M\'erouane Debbah}\\
\authorblockA{SUPELEC \\
Gif-sur-Yvette, France\\
Email:\{mari.kobayashi,merouane.debbah\}@supelec.fr}\\
\and \authorblockN{Shlomo Shamai (Shitz)}\\
\authorblockA{Department of Electrical Engineering, Technion-Israel Institute of Technology \\
Haifa, 32000, Israel\\
Email: sshlomo@ee.technion.ac.il\\
} }

\maketitle

\maketitle

\begin{abstract}
In this paper, we study the frequency-selective broadcast channel with confidential messages (BCC) in which the transmitter sends a confidential message to receiver 1 and a common message to receivers 1 and 2.
In the case of  a block transmission of $N$ symbols followed by a guard interval of $L$ symbols, the frequency-selective channel can be modeled as a $N\times (N+L)$ Toeplitz matrix. 
For this special type of multiple-input multiple-output (MIMO) channels, we propose a practical Vandermonde precoding that consists of projecting the confidential messages in the null space of the channel seen by receiver 2 while superposing the common message. For this scheme, we provide the achievable rate region, i.e. the rate-tuple of the common and confidential messages, and characterize the optimal covariance inputs for some special cases of interest.
It is proved that the proposed scheme achieves the optimal degree of freedom (d.o.f) region. More specifically, it enables to send $l\leq L$ confidential messages and $N-l$ common messages simultaneously over a block of $N+L$ symbols.
Interestingly, the proposed scheme can be applied to secured multiuser scenarios such as the $K+1$-user frequency-selective BCC with $K$ confidential messages and the two-user frequency-selective BCC with two confidential messages. For each scenario, we provide the achievable secrecy degree of freedom (s.d.o.f.) region of the corresponding frequency-selective BCC and prove the optimality of the Vandermonde precoding. One of the appealing features of the proposed scheme is that it does not require any specific secrecy encoding
technique but can be applied on top of any existing powerful encoding schemes.

\end{abstract}


\section{Introduction}

We consider a secured medium such that the transmitter wishes to send a confidential message to its receiver while keeping the eavesdropper, tapping the channel, ignorant of the message. Wyner \cite{wyner1975wc} introduced this model named the wiretap channel to model the degraded broadcast channel where the eavesdropper observes a degraded version of the receiver's signal. In this model, the confidentiality is measured by the equivocation rate, i.e. the mutual information between the confidential message and the eavesdropper's observation.
For the discrete memoryless degraded wiretap channel, Wyner characterized the capacity-equivocation region and showed that a non-zero secrecy rate can be achieved \cite{wyner1975wc}. The most important operating point on the capacity-equivocation region is the secrecy capacity, i.e. the largest reliable communication rate such that the eavesdropper obtains no information about the confidential message (the equivocation rate is as large as the message rate). The secrecy capacity of the Gaussian wiretap channel was given in \cite{leungyancheong1978gwt}. Csisz\'ar  and K\"orner considered a more general wiretap channel in which a common message for both receivers is sent in addition to the confidential message \cite{csiszar1978bcc}. For this model known as the broadcast channel with confidential messages (BCC), the rate-tuple of the common and confidential messages was characterized.

Recently, a significant effort has been made to opportunistically exploit the space/time/user dimensions for secrecy communications (see for example \cite{gopala2006scf,khisti2007mc,liu2008scc,liang2007sco,negi2005scu,parada2005scs,khisti2007stm,liu2007nsc,oggier2007scm,shafiee2007arg,khisti:gmw} and references therein). In \cite{gopala2006scf}, the secrecy capacity of the ergodic slow fading channels was characterized and the optimal power/rate allocation was derived. The secrecy capacity of the parallel fading channels was given \cite{liu2008scc,liang2007sco} where
\cite{liang2007sco} considered the BCC with a common message.
Moreover, the secrecy capacity of the wiretap channel with multiple antennas has been studied in \cite{negi2005scu,parada2005scs,khisti2007stm,liu2007nsc,oggier2007scm,shafiee2007arg,immse09} and references therein. In particular, the secrecy capacity of the multiple-input multiple-output (MIMO) wiretap channel has been fully characterized in \cite{khisti:gmw,khisti2007mc,liu2007nsc,oggier2007scm} and more recently its closed-form expressions under a matrix covariance constraint have been derived in \cite{immse09}. Furthermore, a large number of recent works have considered the secrecy capacity region for more general broadcast channels. In \cite{Liangisita08}, the authors studied the two-user MIMO Gaussian BCC where the capacity region for the case of one common and one confidential message was characterized.
The two-user BCC with two confidential messages, each of which must be kept secret to the unintended receiver, has been studied in \cite{liu2008dmi, LiuPoorIT09,LiuLiuPoorShamaiISIT09,choo2008krb}. In \cite{LiuPoorIT09}, Liu and Poor characterized the secrecy capacity region for the multiple-input single-output (MISO) Gaussian BCC where the optimality of the secret dirty paper coding (S-DPC) scheme was proved. A recent contribution \cite{LiuLiuPoorShamaiISIT09} extended the result to the MIMO Gaussian BCC.
The multi-receiver wiretap channels have been also studied in \cite{khisti2008sbo,choo2009trb,ekrem2008scc,ekrem2009scr,bagherikaram2008srr,bagherikaram2009scr} (and reference therein) where the confidential messages to each receiver must be kept secret to an external eavesdropper. It has been proved that the secrecy capacity region of the MIMO Gaussian multi-receiver wiretap channels is achieved by S-DPC \cite{bagherikaram2009scr, ekrem2009scr}.


However, very few work have exploited the frequency selectivity nature of the channel for secrecy purposes \cite{koyluoglu2008ias} 
where the zeros of the channel provide an opportunity to "hide" information. This paper shows the opportunities provided by the broad-band channel and studies the frequency-selective BCC where the transmitter sends one confidential message to receiver 1 and one common message to both receivers 1 and 2.
The channel state information (CSI) is assumed to be known to both the transmitter and the receivers.
We consider the quasi-static frequency-selective fading channel with $L+1$ paths such that the channel remains fixed during an entire transmission of $n$ blocks for an arbitrary large $n$. It should be remarked that in general the secrecy rate cannot scale with signal-to-noise ratio (SNR) over the channel at hand, unless the channel of receiver 2 has a null frequency band of positive Lebesgue measure (on which the transmitter can ``hide'' the confidential message). In this contribution, we focus on the realistic case where receiver 2 has a full frequency band (without null sub-bands) but operates in a reduced dimension due to practical complexity issues. This is typical of current orthogonal frequency division multiplexing (OFDM) standards (such as IEEE802.11a/WiMax or LTE \cite{standard80211,standard80216,standardLTE}) where a guard interval of $L$ symbols is inserted at the beginning of each block to avoid the inter-block interference and both receivers discard these $L$ symbols. We assume that both users have the same standard receiver, in particular receiver 2 cannot change its hardware structure.
Studying secure communications under this assumption is of interest in general, and can be justified since receiver 2 is actually a legitimate receiver which can receive a confidential message in other communication periods. Of course, if receiver 2 is able to access the guard interval symbols, it can extract the confidential message and the secrecy rate falls down to zero. Although we restrict ourselves to the reduced dimension constraint in this paper, other constraints on the limited capability at the unintended receiver such as energy consumption or hardware complexity might provide a new paradigm to design physical layer secrecy systems.

In the case of a block transmission of $N$ symbols followed by a guard interval of $L$ symbols discarded at both receivers, the frequency-selective channel can be modeled as a $N\times (N+L)$ MIMO Toeplitz matrix. In this contribution, we aim at designing a practical linear precoding scheme that fully exploits the degrees of freedom (d.o.f.) offered by this special type of MIMO channels to transmit both the common message and the confidential message.
To this end, let us start with the following remarks. One one hand, the idea of using OFDM modulation to convert the frequency-selective channel represented by the Toeplitz matrix into a set of parallel fading channel turns out to be useless from a secrecy perspective. Indeed, it is known that the secrecy capacity of the parallel wiretap fading channels does not scale
with SNR \cite{liang2007sco}. On the other hand, recent contributions \cite{khisti:gmw,khisti2007mc,liu2007nsc,oggier2007scm,immse09} showed that the secrecy capacity of the MIMO wiretap channel grows linearly with SNR, i.e. $r \log \SNR$ where $r$ denotes the secrecy degree of freedom (s.d.o.f.) (to be specified). In the high SNR regime, the secrecy capacity of the MISO/MIMO wiretap channel is achieved by sending the confidential message in the null space of the eavesdropper's channel \cite{khisti2007stm,khisti:gmw,liu2007nsc,LiuPoorIT09,LiuLiuPoorShamaiISIT09,immse09}. Therefore, OFDM modulation is highly suboptimal in terms of the s.d.o.f..

Inspired by these remarks, we propose a linear Vandermonde precoder that projects the confidential message in the null space of the channel seen by receiver 2 while superposing the common message.
Thanks to the orthogonality between the precoder of the confidential message and the channel of receiver 2, receiver 2 obtains no information on the confidential message. This precoder is regarded as a single-antenna frequency beamformer that nulls the signal in certain directions seen by receiver 2. The Vandermonde structure comes from the fact that the frequency beamformer is of the type $[1,a_i,a_i^2,...,a_i^{N+L}]^T$ where $a_i$ is one of the roots of the channel seen by receiver 2.
Note that Vandermonde matrices \cite{ryandebbah} have already been considered for cognitive radios \cite{paper:sampaiokobayashi2} and CDMA systems \cite{scaglione2000lvm} to reduce/null interference but not for secrecy applications. One of the appealing aspects of Vandermonde precoding is that it does not require a specific secrecy encoding technique but can be applied on top of  any classical capacity achieving encoding scheme.

For the proposed scheme, we characterize its achievable rate region, the rate-tuple
of the common message, the confidential message, respectively. Unfortunately, the optimal input covariances achieving its boundary is generally difficult to compute due to the non-convexity of the weighted sum rate maximization problem.
Nevertheless, we show that there are some special cases of interest such as the secrecy rate and the maximum sum rate point which enable an explicit characterization of the optimal input covariances. In addition, we provide the achievable d.o.f. region of the frequency-selective BCC, reflecting the behavior of the achievable rate region in the high SNR regime, and prove that the Vandermonde precoding achieves this region. More specifically, it enables to simultaneously transmit $l$ streams of the confidential message and $N-l$ streams of the common message for $l\leq L$ simultaneously over a block of $N+L$ dimensions.
Interestingly, the proposed Vandermonde precoding can be applied to multiuser secure communication scenarios; a) a $K+1$-user frequency-selective BCC with $K$ confidential messages and one common message, b) a two-user frequency-selective BCC with two confidential messages and one common message. For each scenario, we characterize the achievable s.d.o.f. region of the corresponding frequency-selective BCC and show the optimality of the Vandermonde precoding.

The paper is organized as follows. Section \ref{sec:model} presents the frequency-selective fading BCC. Section \ref{sec:vandermonde} introduces
the Vandermonde precoding and characterizes its achievable rate region as well as
the optimal input covariances for some special cases. Section \ref{sec:multiuser} provides the application of the Vandermonde precoding
 to the multi-user secure communications scenarios. Section \ref{sec:numerical} some numerical examples
of the proposed scheme in the various settings, and finally Section \ref{sec:conclusion} concludes the paper.

\textbf{Notation : }
In the following, upper (lower boldface) symbols
will be used for matrices (column vectors) whereas lower symbols
will represent scalar values, $(.)^T$ will denote transpose
operator, $(.)^\star$ conjugation and
$(.)^H=\left((.)^T\right)^\star$ hermitian transpose.
 ${\bf I}_n, \zerov_{n\times m}$
represents the $n\times n$ identity matrix, $n\times m$ zero matrix. $|\Am|, \rank(\Am), \trace(\Am)$ denotes a determinant, rank, trace of a matrix $\Am$, respectively. $\xv^n$ denotes the sequence $(\xv[1],\dots,\xv[n])$. $w,u,v,\xv,\yv,\zv$ denotes the realization of the random variables $W,U,V,X,Y,Z$. Finally, $``\preceq''$ denotes less or equal to in the positive semidefinite ordering between positive semidefinite matrices, i.e. we have $\Am \preceq \Bm$ if $\Bm-\Am$ is positive semidefinite.

\section{System Model}\label{sec:model}

\begin{figure}[t]
    \centering
    \includegraphics[width=0.7\columnwidth]{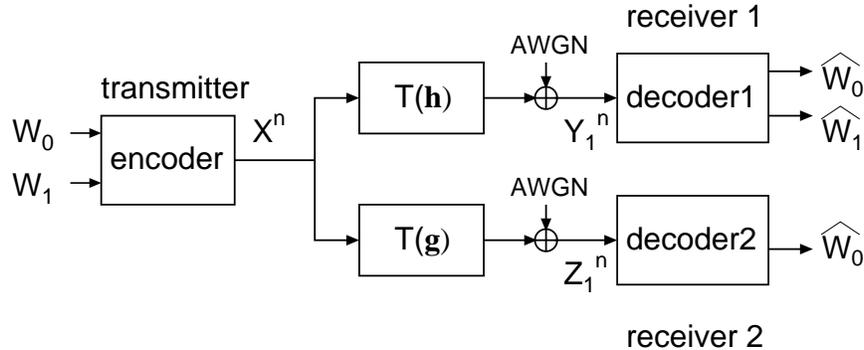}\\
    \vspace{-1em}
    \caption{Frequency-selective broadcast channels with confidential messages.}
    \label{fig:Model}
\end{figure}

We consider the quasi-static frequency-selective fading BCC illustrated in Fig. \ref{fig:Model}. The received signal $\yv[t],\zv[t]\in\CC^{N\times 1}$ of receiver 1, 2 at block $t$ is given by
\begin{eqnarray}\label{Model}
    \yv[t] &=& \Tc(\hv)\xv[t] + \nv[t]\\  \nonumber
    \zv[t] &=& \Tc(\gv) \xv[t]+ \nuv[t] ,\;\;\; t=1,\dots,n
\end{eqnarray}
where $\Tc(\hv),\Tc(\gv)$ denotes a $N \times (N+L)$ Toeplitz matrix with the $L+1$-path channel vector $\hv=[h_L,\dots,h_0]$ of user 1, $\gv=[g_L,\dots,g_0]$ of user 2, respectively, $\xv[t]\in\CC^{(N+L)\times 1}$ denotes the transmit
vector, and finally $\nv[t],\nuv[t]
\sim\Nc_{\Cc}(\zerov, \Id_N)$ are mutually independent additive white Gaussian noise (AWGN). The input vector
is subject to the power constraint given by
\begin{equation}\label{PConstraintX}
 \frac{1}{n} \sum_{t=1}^n \xv[t]^H \xv[t]\leq \overline{P}
\end{equation}
where we let $\overline{P}=(N+L) P$. The structure of $\Tc(\hv)$ is given by
\begin{small}
\begin{eqnarray*}
\Tc(\hv) =
\left[ \begin{array}{cccccc}
h_{L} & \cdots & h_0 & 0 & \cdots & 0 \\
0 & \ddots &  & \ddots & \ddots & \vdots \\
\vdots & \ddots & \ddots &  & \ddots & 0 \\
0 & \cdots & 0 & h_{L} & \cdots & h_0 \\
\end{array} \right].
\end{eqnarray*}
\end{small}
We assume that the channel matrices $\Tc(\hv), \Tc(\gv)$ remain constant for the whole duration of the transmission of $n$ blocks and are known to all terminals.
At each block $t$, we transmit $N+L$ symbols by appending a guard interval of size
$L\ll N$ larger than the delay spread, which enables to avoid the
interference between neighbor blocks.

The transmitter wishes to send a common message message $W_0$ to two receivers and a confidential message $W_1$ to receiver 1.
A $(2^{nR_0}, 2^{nR_1}, n)$ code consists of the following : 1) two message sets $\Wc_0=\{1,\dots, 2^{n R_0}\}$ and $\Wc_1=\{1,\dots, 2^{n R_1}\}$ with the messages $W_0,W_1$ uniformly distributed over the sets $\Wc_0,\Wc_1$, respectively ; 2)
a stochastic encoder that maps each message pair $(w_0,w_1)\in (\Wc_0,\Wc_1)$ to a codeword $\xv^n$ ; 3) One decoder at receiver 1 that maps a received sequence $\yv^n$ to a message pair $(\hat{w}_0^{(1)},\hat{w}_1)\in(\Wc_0,\Wc_1)$ and another at receiver 2 that maps a received sequence $\zv^n$ to a message $\hat{w}_0^{(2)}\in\Wc_0$. The average error probability of a $(2^{nR_0}, 2^{nR_1}, n)$ code is defined as
\begin{equation}
    P_e^{n} = \frac{1}{2^{n R_0}2^{n R_1}}\sum_{w_0\in \Wc_0}\sum_{w_1\in\Wc_1} P_e^n(w_0,w_1)
\end{equation}
where $P_e^n(w_0,w_1)$ denotes the error probability when the message pair $(w_0,w_1)$ is sent defined by
\begin{equation}
    P_e^n(w_0,w_1) \eqdef \Pr\left((\hat{w}_0^{(1)},\hat{w}_1) \neq (w_0,w_1) \cup \hat{w}_0^{(2)} \neq w_0 \right)
\end{equation}

The secrecy level of the confidential message $W_1$ at receiver 2 is measured by the equivocation rate $R_e$ defined as
\begin{equation}
    R_e \eqdef \frac{1}{n} H(W_1|Z^n)
\end{equation}
which is the normalized entropy of the confidential message conditioned on the received signal at receiver 2 and available CSI.

A rate-equivocation tuple $(R_0,R_1,R_e)$ is said to be achievable if for any $\epsilon>0$ there exists a sequence of codes $(2^{nR_0}, 2^{nR_1}, n)$ such that, we have
\begin{eqnarray}\nonumber
    P_e^n &\leq & \epsilon \\ \label{SecrecyNotion}
     R_1 -R_e & \leq & \epsilon
\end{eqnarray}
In this paper, we focus on the perfect secrecy case where receiver 2 obtains no information about the confidential message $W_1$, which is equivalent to $R_e=R_1$. In this setting,
an achievable rate region $(R_0,R_1)$ of the general BCC (expressed in bit per channel use per dimension) is given by \cite{csiszar1978bcc}
\begin{eqnarray}\label{Csiszar}
\Cc_s = \bigcup_{p(u,v,\xv)} \left\{ (R_0,R_1) : R_0 \leq \frac{1}{N+L}\min \{I(U;Y),I(U;Z)\},
R_1 \leq \frac{1}{N+L}[I(V;Y|U)-I(V;Z|U)]
\right\}
\end{eqnarray}
where the union is over all possible distribution $U,V,X$ satisfying \cite[Lemma 1]{choo2008krb}
\begin{equation}
U, V\rightarrow X \rightarrow Y, Z
\end{equation}
where $U$ might be a deterministic function of $V$. Recently, the secrecy capacity region $\Cc_s$ of the two-user MIMO-BCC (\ref{Model}) was characterized in \cite{Liangisita08} and is given by all possible rate tuples $(R_0,R_1)$ satisfying
\begin{eqnarray}\label{MIMOBCC}
R_0 &\leq & \frac{1}{N+L}\min \left\{\log\frac{|\Id + \Hm\Sm\Hm^H|}{|\Id +  \Hm\Km\Hm^H| },\log\frac{|\Id + \Gm\Sm\Gm^H|}{|\Id +  \Gm\Km\Gm^H| }\right\}\\
R_1 & \leq &\frac{1}{N+L}[\log|\Id +  \Hm\Km\Hm^H |-\log|\Id +  \Gm\Km\Gm^H |]
\end{eqnarray}
for some $\zerov \preceq \Km\preceq  \Sm$ with $\Sm$ denotes the input covariance satisfying $\trace(\Sm)\leq \overline{P}$ and $\Hm,\Gm$ denotes the channel matrix of receiver 1, 2, respectively.
Obviously, when only the confidential message is transmitted to receiver 1,
the frequency-selective BCC (\ref{Model}) reduces to the MIMO flat-fading wiretap channel whose secrecy capacity has been characterized in \cite{khisti:gmw,khisti2007stm,oggier2007scm,liu2007nsc,immse09}. In particular, Bustin et al. derived its closed-form expression under a power-covariance constraint \cite{immse09}. Under a total power (trace) constraint, the secrecy capacity of the MIMO Gaussian wiretap channel is expressed as \cite[Theorem 3]{LiuLiuPoorShamaiISIT09} \footnote{In \cite{immse09,LiuLiuPoorShamaiISIT09} the authors consider the real matrices $\Hm,\Gm$. Nevertheless, it is conjectured that for complex matrices the following expression without $1/2$ in the pre-log holds. }
\begin{equation}\label{MIMOMEcapacity}
  C_s= \frac{1}{N+L} \bigcup_{\Sm\succeq \zerov ; \trace(\Sm)\leq \overline{P}} \sum_{j=1}^r \log \phi_j
\end{equation}
where $\{\phi_j\}_{j=1}^{r}$ are the generalized eigen-values greater than one of the following pencil
\begin{equation}
    \left(\Id + \Sm^{1/2}\Hm\Hm^H\Sm^{1/2} , \Id + \Sm^{1/2}\Gm\Gm^H\Sm^{1/2}\right).
\end{equation}
As explicitly characterized in \cite[Theorem 2]{immse09}, the optimal input covariance achieving the above region is chosen such that the confidential message is sent over $r$ sub-channels where receiver 1 observes stronger signals than receiver 2. Moreover, in the high SNR regime the optimal strategy converges to beamforming into the null subspace of $\Gm$ \cite{khisti:gmw,khisti2007mc,liu2007nsc,oggier2007scm} as for the MISO case \cite{khisti:gmw,LiuPoorIT09}. In order to characterize the behavior of the secrecy capacity region in the high SNR regime, we define the d.o.f. region as
\begin{eqnarray}\label{def:sdof}
    (r_0,r_1)\eqdef \lim_{P\rightarrow \infty}\left(\frac{R_0}{\log P},\frac{R_1}{\log P}\right)
\end{eqnarray}
where $r_1$ denotes s.d.o.f. which corresponds precisely to the number $r$ of the generalized eigen-values greater than one in the high SNR.

\section{Vandermonde precoding}\label{sec:vandermonde}

For the frequency-selective BCC specified in Section \ref{sec:model}, we wish to design a practical linear precoding scheme which
fully exploits the d.o.f. offered by the frequency-selective channel. We remarked previously that for a special case when only the confidential message is sent to receiver 1 (without a common message), the optimal strategy consists of beamforming the confidential signal into the null subspace of receiver 2. By applying this intuitive result to the special Toeplitz MIMO channels $\Tc(\hv),\Tc(\gv)$ while including a common message, we propose a linear precoding strategy named \emph{Vandermonde precoding}.
Prior to the definition of the Vandermonde precoding, we provide some properties of a Vandermonde matrix \cite{ryandebbah}.
\begin{property}\label{def-Vand}
Given a full-rank Toeplitz matrix $\Tc(\gv)\in\CC^{N\times (N+L)}$, there exists a Vandermonde matrix $\tilde{\Vm}_1\in\CC^{(N+L)\times l}$ for $l\leq L$ whose structure is given by
\begin{eqnarray}\label{Vandermonde}
\tilde{\Vm}_1 = \left[ \begin{array}{cccccc}
1 & \cdots & 1 \\
a_1 & \cdots & a_{l} \\
a^2_1 & \cdots & a^2_{l} \\
\vdots & \ddots & \vdots \\
a^{N+L-1}_1 & \cdots & a^{N+L-1}_{l} \\
\end{array} \right] 
\end{eqnarray}
where $\{a_1,\dots,a_l\}$ are the $l\leq L$ roots of the polynomial $S(z)=\sum_{i=0}^L g_i z^{L-i}$ with $L+1$ coefficients of the channel $\gv$. Clearly $\tilde{\Vm}_1$ satisfies the following orthogonal condition
\begin{equation}
 \Tc(\gv)\tilde{\Vm}_1 = \zerov_{N\times l}
\end{equation}
and $\rank(\tilde{\Vm}_1)=l$ if $a_1,a_2,\dots,a_l$ are all different.
\end{property}

It is well-known that as the dimension of $N$ and $L$ increases, the Vandermonde matrix $\tilde{\Vm}_1$ becomes ill-conditioned unless the roots are on the unit circle. In other words, the elements of each column either grow in energy or tend to zero \cite{ryandebbah}. Hence, instead of the brut Vandermonde matrix (\ref{Vandermonde}), we consider a unitary Vandermonde matrix obtained either by applying the
Gram-Schmidt orthogonalization or singular value decomposition (SVD) on $\Tc(\gv)$.

\begin{definition}\label{def-uniVand}
We let $\Vm_1$ be a unitary Vandermonde matrix obtained by orthogonalizing the columns of $\tilde{\Vm}_1$. We let $\Vm_0\in\CC^{(N+L)\times (N+L-l)}$ be a unitary matrix in the null space of $\Vm_1$ such that $\Vm_0^H \Vm_1=\zerov$. The common message $W_0$, the confidential message $W_1$, is sent along $\Vm_0, \Vm_1$, respectively.
We call $\Vm=[\Vm_0, \Vm_1]\in \CC^{(N+L)\times (N+L)}$ \emph{Vandermonde precoder}.
\end{definition}
Further, the precoding matrix $\Vm_1$ for the confidential message satisfies the following property.
\begin{lemma}\label{lemma:rank}
Given two Toeplitz matrices $\Tc(\hv),\Tc(\gv)$ where $\hv,\gv$ are linearly independent, there exists a unitary Vandermonde matrix $\Vm_1\in\CC^{(N+L)\times l}$ for $0\leq l \leq L$ satisfying
\begin{eqnarray} \label{orthogonality}
&&  \Tc(\gv) \Vm_1=\zerov_{N\times l}, \\
&&  \rank(\Tc(\hv)\Vm_1) =l.
\end{eqnarray}
\end{lemma}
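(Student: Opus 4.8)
The plan is to establish the two conditions separately, leveraging Property \ref{def-Vand} for the orthogonality and reducing the rank condition to a statement about the roots of two polynomials. First I would invoke Property \ref{def-Vand} to build the brute Vandermonde matrix $\tilde{\Vm}_1$ from $l$ distinct roots $\{a_1,\dots,a_l\}$ of $S_g(z)=\sum_{i=0}^L g_i z^{L-i}$. Since $\Vm_1$ is obtained by orthogonalizing the columns of $\tilde{\Vm}_1$, we have $\Vm_1=\tilde{\Vm}_1\Rm$ for some invertible $l\times l$ matrix $\Rm$, so the two matrices span the same column space. The orthogonality condition $\Tc(\gv)\Vm_1=\Tc(\gv)\tilde{\Vm}_1\Rm=\zerov_{N\times l}$ then follows immediately from Property \ref{def-Vand}, and because $\Rm$ is invertible the rank condition is equivalent to $\rank(\Tc(\hv)\tilde{\Vm}_1)=l$.

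Next I would compute $\Tc(\hv)\tilde{\Vm}_1$ explicitly from the Toeplitz structure. Writing the $i$-th column of $\tilde{\Vm}_1$ as $\vv_i=[1,a_i,\dots,a_i^{N+L-1}]^T$, the $k$-th entry of $\Tc(\hv)\vv_i$ is $\sum_{j=0}^L h_{L-j} a_i^{k-1+j}=a_i^{k-1}S_h(a_i)$ for $k=1,\dots,N$, where $S_h(z)=\sum_{i=0}^L h_i z^{L-i}$. Hence $\Tc(\hv)\vv_i=S_h(a_i)[1,a_i,\dots,a_i^{N-1}]^T$ and, collecting columns, $\Tc(\hv)\tilde{\Vm}_1=\Wm\,\diag(S_h(a_1),\dots,S_h(a_l))$, where $\Wm\in\CC^{N\times l}$ is the Vandermonde matrix with distinct nodes $a_1,\dots,a_l$. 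Since $l\leq L\ll N$ and the nodes are distinct, $\Wm$ has full column rank $l$, so $\rank(\Tc(\hv)\tilde{\Vm}_1)=l$ holds exactly when the diagonal factor is invertible, i.e. when $S_h(a_i)\neq 0$ for every $i$.

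The crux is therefore to argue that one can select $l$ roots of $S_g$ that are not simultaneously roots of $S_h$, and this is precisely where the linear independence of $\hv$ and $\gv$ enters. Because $\hv,\gv$ are linearly independent and the $L+1$-path channels have nonzero leading taps, the degree-$L$ polynomials $S_g$ and $S_h$ are not proportional, so $\gcd(S_g,S_h)$ has degree at most $L-1$; consequently $S_g$ and $S_h$ share at most $L-1$ common roots, leaving at least one --- and in the generic no-common-root case all $L$ --- roots of $S_g$ that avoid the zero set of $S_h$. Choosing the $a_i$ among these roots makes $\diag(S_h(a_1),\dots,S_h(a_l))$ invertible and yields $\rank(\Tc(\hv)\Vm_1)=l$, which completes the argument. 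I expect this root-selection step to be the main obstacle: the Vandermonde algebra of the previous paragraph is routine, but correctly tying the rank condition to the non-proportionality of the channel vectors --- and making explicit how many of the $l\leq L$ roots can actually be guaranteed --- is the substantive part of the proof.
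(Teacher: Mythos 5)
Your route is genuinely different from the paper's. The paper does not touch the roots at all in its proof: it builds the matrix $\Gm$ by appending the discarded columns $\vv_{1,l+1}^H,\dots,\vv_{1,L}^H$ to $\Tc(\gv)$, decomposes $\Tc(\hv)=\Am\Gm+\Bm\Vm_1^H$ along the two complementary row spaces, and gets $\rank(\Tc(\hv)\Vm_1)=\rank(\Bm)=l$ from a dimension count on the stacked matrix $[\Hm_V^T\;\Gm^T]^T$. Your computation $\Tc(\hv)\tilde{\Vm}_1=\Wm\,\diag(S_h(a_1),\dots,S_h(a_l))$ is correct (and is really the same identity that makes Property \ref{def-Vand} work), and it buys something the paper's argument obscures: it shows that $\rank(\Tc(\hv)\Vm_1)=l$ holds \emph{if and only if} none of the chosen roots $a_i$ of $S_g$ is a root of $S_h$. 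That is a sharper and more checkable statement than the paper's.

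The genuine gap is in your root-selection step, and you half-acknowledge it. Linear independence of $\hv,\gv$ gives $\deg\gcd(S_g,S_h)\leq L-1$, hence (at best) \emph{one} root of $S_g$ avoiding the zero set of $S_h$; it does not give $l$ such roots for $l\geq 2$, and with repeated roots (e.g. $S_g=(z-a)^2$, $S_h=(z-a)(z-b)$) it does not even give one. Indeed the lemma as stated is false for $l=L$ whenever $S_g$ and $S_h$ share a root $a$: then $[1,a,\dots,a^{N+L-1}]^T$ lies in the column span of $\Vm_1$ and is killed by $\Tc(\hv)$, so $\rank(\Tc(\hv)\Vm_1)<L$. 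So the missing ingredient is a no-common-root (genericity) hypothesis, not a cleverer argument. To be fair, the paper's own proof needs exactly the same thing: its step (e), asserting that any $N+L$ rows taken from $\Tc(\hv),\Gm$ are linearly independent "from the assumption that $\hv,\gv$ are linearly independent," fails precisely when some chosen $a_i$ satisfies $S_h(a_i)=0$, since the null space of $[\Tc(\hv)^T\;\Gm^T]^T$ is then nontrivial. Your derivation makes this hidden assumption explicit rather than introducing a new error, but as written it does not close the lemma for general $l$ under the stated hypothesis alone; you should either add the assumption that $S_g$ and $S_h$ have no common roots (which holds almost surely for continuously distributed channels) or restrict $l$ to $L$ minus the number of common roots.
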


\begin{proof}
Appendix \ref{appendix:rank-bcc1}.
\end{proof}

In order to send the confidential message intended to receiver 1 as well as the common message to both receivers over the frequency-selective channel (\ref{Model}), we consider the Gaussian superposition coding based on the Vandermonde precoder of Definition \ref{def-uniVand}. Namely, at block $t$, we form the transmit vector as
\begin{equation}
\xv[t] = {\Vm}_0\uv_0[t] +{\Vm}_1 \uv_1[t]
\end{equation}
where the common message vector $\uv_0[t]$ and the confidential message vector $\uv_1[t]$ are mutually independent Gaussian vectors with zero mean and covariance $\Sm_0,\Sm_1$, respectively.
Under this condition, the input covariances subject to
\begin{eqnarray}\label{CovConstraint}
    \trace(\Sm_0)+\trace(\Sm_1) \leq \overline{P}
\end{eqnarray}
satisfy the power constraint (\ref{PConstraintX}). We let $\Fc$ denote the feasible set $(\Sm_0,\Sm_1)$ satisfying (\ref{CovConstraint}).

\begin{theorem} \label{theorem:region}
The Vandermonde precoding achieves the following secrecy rate region
\begin{eqnarray}\label{Vandregion}
\Rc_s =\cov \bigcup_{(\Sm_0,\Sm_1)\in \Fc} \left\{
\begin{array}
[c]{l}%
(R_0,R_1) :\\
 R_0 \leq \frac{1}{N+L}\min \{\log\frac{|\Id_N + \Hm_0\Sm_0 \Hm_0^H + \Hm_1\Sm_1\Hm_1^H|}{|\Id_N +\Hm_1\Sm_1\Hm_1^H|}, \log\left|\Id_N + \Gm_0 \Sm_0\Gm_0^H\right|\}, \\
 R_1 \leq \frac{1}{N+L}\log\left|\Id_N + \Hm_1 \Sm_1\Hm_1^H\right|
\end{array}
\right\}
\end{eqnarray}
where $\cov$ denotes the convex-hull and we let $\Hm_0= \Tc(\hv){\Vm}_0,\Hm_1=\Tc(\hv){\Vm}_1,\Gm_0= \Tc(\gv) {\Vm}_0 $.
\end{theorem}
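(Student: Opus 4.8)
The plan is to specialise the general Csisz\'ar--K\"orner achievable region (\ref{Csiszar}) to the Gaussian superposition input generated by the Vandermonde precoder, and to verify that the particular auxiliary variables induced by the precoder reproduce exactly the three rate expressions in (\ref{Vandregion}). Concretely, I would identify the common-message auxiliary variable with the cloud centre $U=\uv_0\sim\Nc_\Cc(\zerov,\Sm_0)$ and take $V=X=\Vm_0\uv_0+\Vm_1\uv_1$, with $\uv_1\sim\Nc_\Cc(\zerov,\Sm_1)$ independent of $\uv_0$. Since $\Vm_0,\Vm_1$ have orthonormal columns with $\Vm_0^H\Vm_1=\zerov$, the common part is recovered as $U=\uv_0=\Vm_0^H V$, so $U$ is a deterministic function of $V$ and the Markov chain $U,V\rightarrow X\rightarrow Y,Z$ holds trivially (indeed $V=X$), matching the admissible structure of (\ref{Csiszar}). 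It then remains to evaluate $I(U;Y)$, $I(U;Z)$, $I(V;Y|U)$ and $I(V;Z|U)$ for this Gaussian choice.

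The key manipulation is to substitute the precoded signal into the channel (\ref{Model}) and invoke the orthogonality $\Tc(\gv)\Vm_1=\zerov$ from (\ref{orthogonality}). This yields $Y=\Hm_0\uv_0+\Hm_1\uv_1+\nv$ and, crucially, $Z=\Gm_0\uv_0+\nuv$, i.e.\ receiver 2's observation carries no component of the confidential signal $\uv_1$. From the standard Gaussian entropy formula I would then read off $I(U;Y)=\log\frac{|\Id_N+\Hm_0\Sm_0\Hm_0^H+\Hm_1\Sm_1\Hm_1^H|}{|\Id_N+\Hm_1\Sm_1\Hm_1^H|}$, treating $\Hm_1\uv_1$ as coloured Gaussian noise, and $I(U;Z)=\log|\Id_N+\Gm_0\Sm_0\Gm_0^H|$; these are precisely the two terms inside the minimum defining $R_0$. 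For the confidential rate, conditioning on $U=\uv_0$ removes the cloud centre, so $I(V;Y|U)=I(\uv_1;\Hm_1\uv_1+\nv)=\log|\Id_N+\Hm_1\Sm_1\Hm_1^H|$.

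The decisive point---and the source of perfect secrecy without any dedicated wiretap code---is that $I(V;Z|U)=0$: given $\uv_0$, the observation $Z=\Gm_0\uv_0+\nuv$ is a function of $\uv_0$ plus independent noise, hence statistically independent of $\uv_1$ and therefore of $V$. Consequently the bracket $[I(V;Y|U)-I(V;Z|U)]$ collapses to $\log|\Id_N+\Hm_1\Sm_1\Hm_1^H|$, which matches the claimed $R_1$ bound and simultaneously certifies the perfect-secrecy condition $R_e=R_1$ of (\ref{SecrecyNotion}). Dividing each term by $N+L$, taking the union over all feasible $(\Sm_0,\Sm_1)\in\Fc$, and closing under a time-sharing argument then delivers (\ref{Vandregion}).

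I expect the four mutual-information computations to be routine once the orthogonality is exploited; the points demanding care are (i) checking that the chosen pair $(U,V)$ genuinely satisfies the admissibility constraints of (\ref{Csiszar}), in particular the Markov structure and the identification of $U$ as a deterministic function of $V$, and (ii) justifying the convex hull via an explicit time-sharing random variable, since the region described before applying $\cov$ need not be convex. I note that Lemma \ref{lemma:rank}, which guarantees $\rank(\Hm_1)=l$, is not required for the rate identities above but becomes essential when this region is later converted into the degrees-of-freedom statement.
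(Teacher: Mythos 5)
Your proposal is correct and follows essentially the same route as the paper: choose the superposition auxiliary variables induced by the precoder, use the orthogonality $\Tc(\gv)\Vm_1=\zerov$ to get $I(V;Z|U)=0$, evaluate the remaining three Gaussian mutual informations, and substitute into the Csisz\'ar--K\"orner region (\ref{Csiszar}). The only cosmetic difference is that you take $U=\uv_0$ while the paper takes $U=\Vm_0\uv_0$; since $\Vm_0$ has orthonormal columns these are deterministic functions of one another and all the information quantities coincide.
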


\begin{proof}
Due to the orthogonal property (\ref{orthogonality}) of the unitary Vandermonde matrix, receiver 2 only observes the common message, which yields the received signals given by
\begin{eqnarray}
 \yv &=& \Tc(\hv) \Vm_0 \uv_0 + \Tc(\hv) \Vm_1 \uv_1 + \nv\\ \nonumber
 \zv &=& \Tc(\gv) \Vm_0 \uv_0 + \nuv
\end{eqnarray}
where we drop the block index.
We examine the achievable rate region $\Rc^{\rm}_s$ of the Vandermonde precoding. By letting
the auxiliary variables $U= \Vm_0 \uv_0, V = U + \Vm_1\uv_1$ and $X=V$,
we have
\begin{eqnarray*}\label{I(u;y)}
 I(U;Y)&=& \frac{1}{N+L}\log\frac{|\Id_N + \Tc(\hv){\Vm}_0\Sm_0 {\Vm}_0^H\Tc(\hv)^H + \Tc(\hv){\Vm}_1\Sm_1{\Vm}_1^H\Tc(\hv)^H|}
{|\Id_N +\Tc(\hv){\Vm}_1\Sm_1 {\Vm}_1^H\Tc(\hv)^H|}\\
I(U;Z) &=&  \frac{1}{N+L}\log\left|\Id_N +  \Tc(\gv) {\Vm}_0 \Sm_0{\Vm}_0^H \Tc(\gv)^H\right|\\  \label{Iprivate}
I(V;Y|U) &=&  \frac{1}{N+L}\log\left|\Id_N + \Tc(\hv){\Vm}_1 \Sm_1 {\Vm}_1^H \Tc(\hv)^H\right|\\
I(V;Z|U) &=&  0
\end{eqnarray*}
Plugging these expressions to (\ref{Csiszar}), we obtain (\ref{Vandregion}).
\end{proof}

\begin{figure}[t]
    \centering
    \includegraphics[width=0.4\columnwidth]{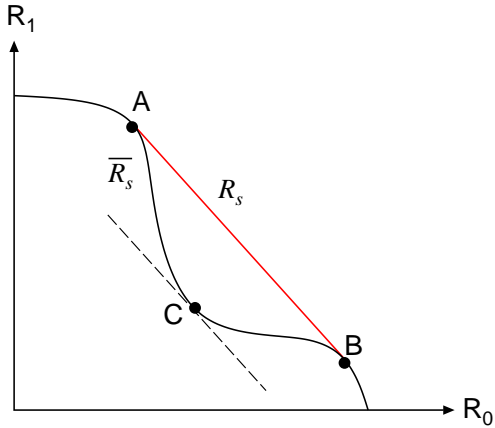}\\
    \vspace{-1em}
    \caption{Achievable rate region $\Rc_s$ obtained by the convex-hull on $\overline{\Rc}_s$.}
    \label{fig:region}
\end{figure}
The boundary of the achievable rate region of the Vandermonde precoding can be characterized by solving the weighted sum rate maximization.
Any point $(R_0^{\star},R_1^{\star})$ on the boundary of the convex region $\Rc_s$ is obtained by solving
\begin{equation}\label{WSRM}
\max_{(R_0,R_1)\in \Rc_s} \gamma_0 R_0 + \gamma_1 R_1
\end{equation}
for non-negative weights $\gamma_0,\gamma_1$ satisfying $\gamma_0+\gamma_1=1$. When the region $\overline{\Rc}_s$, obtained without convex hull,  is non-convex, the set of the optimal covariances $(\Sm_0^{\star},\Sm_1^{\star})$ achieving the boundary point might not be unique. Fig. \ref{fig:region} depicts an example in which the achievable rate region $\Rc_s$ is obtained by the convex hull operation on the region $\overline{\Rc}_s$, i.e. replacing the non-convex subregion by the line segment $A,B$. For the weight ratio $\gamma_1/\gamma_0$ corresponding to the slope of the line segment $A,B$, there exist two optimal sets of the covariances yielding the points A and B (which clearly dominate the point C). These points are the solution to the weighted sum rate maximization (\ref{WSRM}). In summary, an optimal covariance set achieving (\ref{WSRM}) (might not be unique) is the solution of
\begin{equation}\label{WSRM2}
\max_{(\Sm_0,\Sm_1)\in \Fc} \gamma_0 R_0 + \gamma_1 R_1=
 \max_{(\Sm_0,\Sm_1)\in \Fc} \gamma_0 \min\{R_{01},R_{02}\} + \gamma_1 R_1
\end{equation}
where we let
\begin{eqnarray*}
R_{01}(\Sm_0,\Sm_1)&=& \frac{1}{N+L}\log\frac{|\Id_N +\Hm_0\Sm_0\Hm_0^H +\Hm_1\Sm_1\Hm_1^H|}{|\Id_N +\Hm_1\Sm_1\Hm_1^H|}\\
R_{02}(\Sm_0) &=&  \frac{1}{N+L}\log\left|\Id_N +\Gm_0\Sm_0 \Gm_0^H\right| \\
R_1(\Sm_1) &=& \frac{1}{N+L}\log\left|\Id_N +\Hm_1\Sm_1\Hm_1^H\right|
\end{eqnarray*}
Following \cite[Section II-C]{poor1994isd} (and also \cite[Lemma 2]{liang2007sco}), we remark that the solution to the max-min problem (\ref{WSRM2}) can be found by hypothesis testing of three cases, $R_{01}<R_{02}, R_{02}<R_{01}$, and $R_{01}=R_{02}$. Formally, we have the following lemma.
\begin{lemma} \label{maxmin}
The optimal $(\Sm_0^{\star},\Sm_1^{\star})$, solution of (\ref{WSRM2}), is given by one of the three solutions.
\begin{enumerate}
   \item Case 1 : $(\Sm_0^{\star},\Sm_1^{\star})$ maximizes
      \begin{equation}\label{f1}
      f_1(\Sm_0,\Sm_1)= \gamma_0\log\frac{|\Id_N + \Hm_1\Sm_1\Hm_1^H + \Hm_0\Sm_0\Hm_0^H|}{|\Id_N +\Hm_1\Sm_1\Hm_1^H |} + \gamma_1 \log\left|\Id_N +\Hm_1\Sm_1\Hm_1^H\right|
      \end{equation}
      and satisfies $R_{01}(\Sm_0^{\star},\Sm_1^{\star})< R_{02}(\Sm_1^{\star})$.
   \item Case 2 : $(\Sm_0^{\star},\Sm_1^{\star})$ maximizes
      \begin{equation}\label{f2}
      f_2(\Sm_0,\Sm_1)= \gamma_0  \log\left|\Id_N +\Gm_0\Sm_0\Gm_0^H\right| + \gamma_1 \log\left|\Id_N +\Hm_1\Sm_1\Hm_1^H\right|
      \end{equation}
      and satisfies $R_{02}(\Sm_1^{\star})< R_{01}(\Sm_0^{\star},\Sm_1^{\star})$.
  \item Case 3 : $(\Sm_0^{\star},\Sm_1^{\star})$ maximizes
      \begin{eqnarray}\label{f3}
      f_3(\Sm_0,\Sm_1) &=& \gamma_0 \left[\theta \log\frac{|\Id_N + \Hm_1\Sm_1\Hm_1^H + \Hm_0\Sm_0\Hm_0^H|}{|\Id_N +\Hm_1\Sm_1\Hm_1^H |}
      +(1-\theta)\log\left|\Id_N +\Gm_0\Sm_0\Gm_0^H\right| \right] \\ \nonumber
       && +\gamma_1 \log\left|\Id_N +\Hm_1\Sm_1\Hm_1^H\right|
      \end{eqnarray}
      and satisfies $R_{01}(\Sm_0^{\star},\Sm_1^{\star})=R_{02}(\Sm_1^{\star})$ for some $0<\theta<1$.
\end{enumerate}
\end{lemma}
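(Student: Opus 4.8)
The plan is to exploit the elementary identity $\min\{R_{01},R_{02}\}=\min_{0\le\theta\le1}[\theta R_{01}+(1-\theta)R_{02}]$, which lets me rewrite the max-min (\ref{WSRM2}) as $\max_{(\Sm_0,\Sm_1)\in\Fc}\min_{0\le\theta\le1}g(\theta,\Sm_0,\Sm_1)$, where $g(\theta,\Sm_0,\Sm_1)=\gamma_0[\theta R_{01}+(1-\theta)R_{02}]+\gamma_1 R_1$. By construction $g$ reduces to $\frac{1}{N+L}f_1$, $\frac{1}{N+L}f_2$, and $\frac{1}{N+L}f_3$ when $\theta=1$, $\theta=0$, and $\theta\in(0,1)$ respectively, so the lemma is exactly the assertion that the optimizer of $g$ sits on one of these three slices. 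First I would note that $\Fc$ is compact (it is cut out of the cone of pairs of positive semidefinite matrices by the single trace bound (\ref{CovConstraint})) and that $R_{01},R_{02},R_1$ are continuous, so a global maximizer $(\Sm_0^\star,\Sm_1^\star)$ exists; the whole argument then reduces to showing it is produced by one of the three sub-problems, selected by the sign of $R_{01}-R_{02}$ at the optimum.

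The easy half is a sufficiency bound. Since $\min\{R_{01},R_{02}\}\le R_{01}$ pointwise, the objective of (\ref{WSRM2}) is everywhere dominated by $\frac{1}{N+L}f_1$; hence if $(\Sm_0',\Sm_1')$ maximizes $f_1$ over $\Fc$ and \emph{also} satisfies $R_{01}<R_{02}$ there, the objective and $\frac{1}{N+L}f_1$ coincide at that point, so it attains the dominating upper bound and is therefore a global solution of (\ref{WSRM2}) --- this is Case 1. Interchanging the roles of $R_{01}$ and $R_{02}$ gives Case 2 verbatim. The value of this step is that each candidate is obtained by an ordinary (if non-convex) maximization of a single smooth function, then validated \emph{a posteriori} by its side condition.

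It remains to prove completeness, i.e. that the certification succeeds on at least one branch. Here I would argue at the optimum $(\Sm_0^\star,\Sm_1^\star)$ via the trichotomy on $R_{01}^\star-R_{02}^\star$. If $R_{01}^\star<R_{02}^\star$ (resp. $>$), continuity makes $\min\{R_{01},R_{02}\}$ equal the single smooth branch $R_{01}$ (resp. $R_{02}$) on a whole neighborhood, so $(\Sm_0^\star,\Sm_1^\star)$ satisfies the first-order conditions for $f_1$ (resp. $f_2$) on $\Fc$ with its side condition active, putting us in Case 1 (resp. Case 2). The remaining possibility $R_{01}^\star=R_{02}^\star$ is Case 3: at this kink the ordinary gradient is replaced by the subdifferential of $\min\{R_{01},R_{02}\}$, which is the segment $\{\theta\,\partial R_{01}+(1-\theta)\,\partial R_{02}:0\le\theta\le1\}$, and the first-order optimality condition over the convex set $\Fc$ forces, for some $\theta\in[0,1]$, the point $(\Sm_0^\star,\Sm_1^\star)$ to be stationary for $g(\theta,\cdot)$ --- i.e. to maximize $f_3$ for that $\theta$. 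I would then exclude $\theta\in\{0,1\}$, since each endpoint would reduce $f_3$ to $f_1$ or $f_2$ and force the corresponding strict inequality, contradicting $R_{01}^\star=R_{02}^\star$, leaving $0<\theta<1$ as claimed.

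The main obstacle is precisely this Case 3 step, and the reason is the non-convexity flagged before (\ref{WSRM}): because $R_{01}$ is a difference of log-determinants it need not be concave in $(\Sm_0,\Sm_1)$, so $g(\theta,\cdot)$ is not jointly concave and I cannot simply invoke a minimax/saddle-point theorem to swap $\max_\Sm$ and $\min_\theta$ and read off $\theta$ globally. The argument must therefore stay local, establishing the supporting-hyperplane (subgradient) optimality condition at the specific optimizer and treating $\theta$ as a Lagrange-type multiplier that balances the two active rate constraints through $R_{01}=R_{02}$, in the spirit of the least-favorable-prior characterization of minimax tests in \cite[Section II-C]{poor1994isd}. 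The side conditions attached to the three cases are then exactly the consistency checks that determine which branch is active.
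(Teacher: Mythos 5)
Your overall route is the same one the paper invokes: the paper's ``proof'' of this lemma is essentially the single sentence preceding it, which defers to \cite[Section II-C]{poor1994isd} and \cite[Lemma 2]{liang2007sco}, and your decomposition --- the identity $\min\{R_{01},R_{02}\}=\min_{0\le\theta\le1}[\theta R_{01}+(1-\theta)R_{02}]$, the sufficiency bound for Cases 1 and 2, and the trichotomy on the sign of $R_{01}-R_{02}$ at the optimum with $\theta$ playing the role of a least-favorable weight at the kink --- is precisely that standard argument. Your sufficiency half (a global maximizer of $f_1$ or $f_2$ whose side condition holds is automatically a global solution of (\ref{WSRM2})) is correct and is in fact spelled out more carefully than anywhere in the paper.

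The gap is in the completeness half, and it is larger than you acknowledge. For Cases 1 and 2 you deduce from $R_{01}^{\star}<R_{02}^{\star}$ only that $(\Sm_0^{\star},\Sm_1^{\star})$ satisfies the \emph{first-order conditions} for $f_1$ on $\Fc$, and then declare this ``puts us in Case 1.'' But Case 1 of the lemma asserts that the optimizer \emph{maximizes} $f_1$, and stationarity does not upgrade to global maximization here because $f_1$ contains the difference of log-determinants $R_{01}$ and is not concave when $\gamma_1<\gamma_0$ (the paper itself flags this in the remark after Theorem \ref{theorem:covariance}). In the cited references the branches are concave, and that is exactly what closes this step: if $f_1$ is concave and had a strictly larger value at some $\Sm'$, then along the segment from $\Sm^{\star}$ to $\Sm'$ the value of $f_1$ strictly increases immediately while, by continuity, $R_{01}<R_{02}$ persists for small steps, contradicting optimality of $\Sm^{\star}$ for the max-min; without concavity one can construct continuous $R_{01},R_{02}$ for which the max-min optimum sits on the branch $R_{01}<R_{02}$ at a point that is only a \emph{local} maximizer of $f_1$. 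The same local-versus-global issue affects your Case 3 subdifferential argument, which you do flag, but you should recognize that it already bites Cases 1 and 2. To be fair to you, the paper never resolves this either --- its Appendix B works only with KKT conditions described as ``necessary for the optimality'' and concedes convergence only to local optima when the weights make the objective non-concave --- so your proposal reaches the same level of rigor as the paper on the hard direction; but as a self-contained proof of the lemma as literally stated, the step ``stationary for $f_i$ at the optimum $\Rightarrow$ the optimum maximizes $f_i$'' is missing and needs either a concavity hypothesis ($\gamma_1\ge\gamma_0$, resp.\ $\gamma_1\ge\gamma_0\theta$) or a reformulation of the lemma in terms of stationary points. A smaller issue: your exclusion of $\theta\in\{0,1\}$ in Case 3 does not follow as claimed, since being stationary for $f_1$ at a point where $R_{01}=R_{02}$ forces no strict inequality; the endpoints simply merge Case 3 into the closures of Cases 1 and 2.
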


Before considering the weighted sum rate maximization (\ref{WSRM2}), we apply SVD to $\Hm_1\in\CC^{N\times l},\Gm_0\in\CC^{N\times (N+L-l)}$
\begin{eqnarray}\label{SVD}
\Hm_1&=& \Um_{h1} \Lambdam_{h1} {\Vm_{h1}}^H\\ \nonumber
\Gm_0 &=& \Um_{g0} \Lambdam_{g0} {\Vm_{g0}}^H
\end{eqnarray}
where $ \Um_{h1}, \Um_{g0}\in\CC^{N\times N}$, $\Vm_{h1}\in\CC^{l\times l}$, and $\Vm_{g0}\in\CC^{(N+L-l)\times (N+L-l)}$ are unitary,  $\Lambdam_{h1},\Lambdam_{g0}$ contain positive singular values $\{\sqrt{\lambda^{h1}_{i}}\}_{i=1}^{l}$, $\{\sqrt{\lambda^{g0}_i}\}_{i=1}^{N+L-l}$, respectively.
Following \cite[Theorem 3]{liang2007sco}, we apply Lemma \ref{maxmin} to solve the weighted sum rate maximization.

\begin{theorem}\label{theorem:covariance}
The set of the optimal covariances $(\Sm_0^{\star},\Sm_1^{\star})$, achieving the boundary of the achievable rate region $\Rc_s$ of the Vandermonde precoding, corresponds to one of the following three solutions.
\begin{enumerate}
\item Case 1 :$(\Sm_0^{\star},\Sm_1^{\star})=(\Sm_0^{1},\Sm_1^{1})$, if $(\Sm_0^{1},\Sm_1^{1})$, solution of the following KKT conditions,  satisfies $R_{01}(\Sm_0^{1},\Sm_1^{1})< R_{02}(\Sm_1^{1})$.
    \begin{eqnarray} \label{case1-1}
&&    \gamma_0  \Hm_0^H \Gammam^{-1}\Hm_0 + \Psim_0 = \mu\Id_{N+L-l} \\  \label{case1-2}
&&\gamma_0\Hm_1^H \Gammam^{-1}\Hm_1 +
(\gamma_1-\gamma_0)\Hm_1^H(\Id_N +\Hm_1\Sm_1\Hm_1^H)^{-1}\Hm_1 + \Psim_1 = \mu\Id_l
\end{eqnarray}
    where $\trace(\Psim_i\Sm_i)=0$ with a positive semidefinite $\Psim_i$ for $i=0,1$, $\mu\geq 0$ is determined such that $\trace(\Sm_0)+\trace(\Sm_1)=\overline{P}$, and we let $\Gammam=\Id_N +\Hm_0\Sm_0\Hm_0^H+ \Hm_1\Sm_1\Hm_1^H$.
  \item Case 2 :$(\Sm_0^{\star},\Sm_1^{\star})=(\Sm_0^{2},\Sm_1^{2})$ if the following $(\Sm_0^{2},\Sm_1^{2})$ fulfills $R_{02}(\Sm_1^{2})< R_{01}(\Sm_0^{2},\Sm_1^{2})$.\\
   We let $\Sm_0^2=\Vm_{g0}\hat{\Sm}_0\Vm_{g0}^H$ and $\Sm_1^2=\Vm_{h1}\hat{\Sm}_1\Vm_{h1}^H$
 where 
 $\hat{\Sm}_0,\hat{\Sm}_1$ are diagonal with the $i$-th element given by
 \begin{eqnarray}\label{solution2}
    p_{0,i} &=& \left[\frac{\gamma_0}{\mu}- \frac{1}{\lambda^{g0}_{i}}\right]_+ , \;\;i=1,\dots,N+L-l\\ \nonumber
    p_{1,i} &=& \left[\frac{\gamma_1}{\mu}- \frac{1}{\lambda^{h1}_{i}}\right]_+,\;\; i=1,\dots,l
 \end{eqnarray}
 where $\mu\geq 0$ is determined such that $\sum_{i=1}^{N+L-l}p_{0,i}+ \sum_{i=1}^l p_{1,i}=\overline{P}$.
  \item Case 3 : $(\Sm_0^{\star},\Sm_1^{\star})=(\Sm_0^{3},\Sm_1^{3})$, if $(\Sm_0^{3},\Sm_1^{3})$, solution of the following KKT conditions, satisfies $R_{02}^{\theta}(\Sm_1^{3})= R_{01}^{\theta}(\Sm_0^{3},\Sm_1^{3})$ for some $0< \theta<1$.
      \begin{eqnarray}\label{case3-1}
&      \theta\Hm_0^H \Gammam^{-1}\Hm_0+ (1-\theta)
\Gm_0^H\left(\Id_N + \Gm_0\Sm_0\Gm_0^H\right)^{-1}\Gm_0  + \Psim_0 = \mu\Id_{N+L-l} \\ \label{case3-2}
&\gamma_0 \theta \Hm_1^H \Gammam^{-1}\Hm_1 +
(\gamma_1-\gamma_0\theta)\Hm_1^H\left(\Id_N+\Hm_1\Sm_1\Hm_1^H\right)^{-1}\Hm_1 + \Psim_1 =\mu\Id_l
\end{eqnarray}
where $\trace(\Psim_i\Sm_i)=0$ with a positive semidefinite $\Psim_i$ for $i=0,1$,
$\mu\geq 0$ is determined such that $\trace(\Sm_0)+\trace(\Sm_1)=\overline{P}$.
\end{enumerate}
\end{theorem}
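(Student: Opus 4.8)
The plan is to lean on Lemma \ref{maxmin}, which has already reduced the max-min weighted sum rate problem (\ref{WSRM2}) to three \emph{ordinary} maximizations, of $f_1$, $f_2$ and $f_3$ respectively, each over the feasible set $\Fc$. For every case it then suffices to form the Lagrangian, attaching a single scalar multiplier $\mu\geq 0$ to the joint trace constraint (\ref{CovConstraint}) and matrix multipliers $\Psim_i\succeq\zerov$ to the constraints $\Sm_i\succeq\zerov$ with complementary slackness $\trace(\Psim_i\Sm_i)=0$, and then to set the matrix gradients to zero. The single analytic ingredient is the log-det derivative $\frac{\partial}{\partial \Sm}\log|\Id_N + \Am\Sm\Am^H| = \Am^H(\Id_N+\Am\Sm\Am^H)^{-1}\Am$, applied with $\Am\in\{\Hm_0,\Hm_1,\Gm_0\}$.

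For Case 1 I would first rewrite $f_1=\gamma_0\log|\Gammam| + (\gamma_1-\gamma_0)\log|\Id_N+\Hm_1\Sm_1\Hm_1^H|$ with $\Gammam=\Id_N+\Hm_0\Sm_0\Hm_0^H+\Hm_1\Sm_1\Hm_1^H$. Differentiating in $\Sm_0$ gives $\gamma_0\Hm_0^H\Gammam^{-1}\Hm_0$ and in $\Sm_1$ gives $\gamma_0\Hm_1^H\Gammam^{-1}\Hm_1+(\gamma_1-\gamma_0)\Hm_1^H(\Id_N+\Hm_1\Sm_1\Hm_1^H)^{-1}\Hm_1$; equating each to $\mu\Id-\Psim_i$ yields exactly (\ref{case1-1})--(\ref{case1-2}). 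The identical computation applied to $f_3$, where the $\Sm_0$-part is now the $\theta$-weighted mixture of $\log|\Gammam|$ and $\log|\Id_N+\Gm_0\Sm_0\Gm_0^H|$ and the $\Sm_1$-coefficients pick up $\theta$, produces (\ref{case3-1})--(\ref{case3-2}).

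Case 2 is the clean one because $f_2$ \emph{decouples}: its first term depends on $\Sm_0$ only through $\Gm_0$ and its second on $\Sm_1$ only through $\Hm_1$, the two variables interacting solely through the shared power budget. Substituting the SVDs (\ref{SVD}) and using the standard fact that each $\log\det$ is maximized by aligning the input eigenbasis with $\Vm_{g0}$ (resp. $\Vm_{h1}$), the matrix problem collapses to a scalar problem in the eigenvalues $\{\lambda^{g0}_i\}$ and $\{\lambda^{h1}_i\}$, whose KKT stationarity is a weighted water-filling with a common multiplier $\mu$ but numerators $\gamma_0$ and $\gamma_1$; this is precisely (\ref{solution2}), and the argument runs parallel to \cite[Theorem 3]{liang2007sco}.

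The main obstacle is concavity. From the rewriting above, $f_1$ is jointly concave in $(\Sm_0,\Sm_1)$ only when $\gamma_1\geq\gamma_0$, since the coefficient $(\gamma_1-\gamma_0)$ multiplying the concave $\log\det$ becomes negative otherwise; the same defect afflicts $f_3$ through $(\gamma_1-\gamma_0\theta)$. Hence for $\gamma_1<\gamma_0$ the KKT systems (\ref{case1-1})--(\ref{case1-2}) and (\ref{case3-1})--(\ref{case3-2}) are only \emph{necessary} and may admit several stationary points, which is consistent with the earlier observation (points $A$, $B$ of Fig.~\ref{fig:region}) that the optimal covariance set need not be unique. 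I would therefore present the three KKT characterizations as producing candidate boundary points and invoke Lemma \ref{maxmin} to select, by testing the sign of $R_{01}-R_{02}$ at each candidate, the one that actually attains the maximum in (\ref{WSRM2}). The remaining point is that in Case 3 the parameter $\theta\in(0,1)$ can be tuned to enforce $R_{01}=R_{02}$: as $\theta$ sweeps from $0$ to $1$ the solution interpolates between the Case 2 and Case 1 regimes, so a one-dimensional intermediate-value argument guarantees a crossing, completing the characterization.
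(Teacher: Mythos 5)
Your proposal is correct and follows essentially the same route as the paper's proof: for each of the three cases from Lemma \ref{maxmin} you form the Lagrangian with the scalar multiplier $\mu$ for the trace constraint and the matrices $\Psim_i$ for positive semidefiniteness, apply the log-det gradient to obtain the stationarity conditions, note that concavity (hence sufficiency of the KKT conditions) holds only for $\gamma_1\geq\gamma_0$ in Case 1 and $\gamma_1\geq\gamma_0\theta$ in Case 3, and exploit the decoupling of $f_2$ via the SVDs to reduce Case 2 to water-filling. The extra remarks on multiplicity of stationary points and the intermediate-value selection of $\theta$ are consistent with, and slightly more explicit than, the paper's treatment.
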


\begin{proof}
    Appendix \ref{appendix:covariance}.
\end{proof}

\begin{remark}
Due to the non-concavity of the underlying weighted sum rate functions, it is generally difficult to characterize the boundary of the achievable rate region $\Rc_s$ except for some special cases. The special cases include the corner points, in particular, the secrecy rate for the case of sending only the confidential message ($\gamma_1=1$), as well as the maximum sum rate point for the equal weight case ($\gamma_0=\gamma_1$). It is worth noticing that under equal weight the objective functions in three cases are all concave in $\Sm_0,\Sm_1$ since $f_1$ is concave if $\gamma_1\geq \gamma_0$ and $f_3$ is concave if $\gamma_1\geq \gamma_0\theta$ and $0<\theta<1$.
\end{remark}

The maximum sum rate point $\gamma_0=\gamma_1$ can be found by applying the following greedy search \cite{liang2007sco}.
\textbf{Greedy search to find the maximum sum rate point}
\begin{enumerate}
  \item Find $\Sm_0,\Sm_1$ maximizing $f_1$ and check $R_{02}<R_{01}$. If yes stop. Otherwise go to 2).
  \item Find $\Sm_0,\Sm_1$ maximizing $f_2$ and check $R_{01}<R_{01}$. If yes stop. Otherwise go to 3).
  \item Find $\Sm_0,\Sm_1$ maximizing $f_3$ and check $R_{01}^{\theta}=R_{01}^{\theta}$ for some $0<\theta<1$.
\end{enumerate}

For the special case of $\gamma_1=1$, Theorem \ref{theorem:covariance} yields the achievable secrecy rate with the Vandermonde precoding.
\begin{cor}
The Vandermonde precoding achieves the secrecy rate
\begin{eqnarray}\label{secrecyrate}
R_1^{\rm vdm} &=& \max_{\Sm_1: \trace(\Sm_1)\leq \overline{P}} \frac{1}{N+L}\log\det\left( \Id_{N} + \Tc(\hv)\Vm_1 \Sm_1 \Vm_1^H  \Tc(\hv)^H\right) \\ \nonumber
&=&  \frac{1}{N+L}\sum_{i=1}^L \log(\mu\lambda^{h1}_i)_+
\end{eqnarray}
where the last equality is obtained by applying SVD to $\Hm_1=\Tc(\hv)\Vm_1$ and plugging the power allocation of (\ref{solution2}) with $\gamma_0=0,\gamma_1=1$, $\mu$ is determined such that $\sum_{l=1}^L p_{1i}\leq \overline{P}$.
\end{cor}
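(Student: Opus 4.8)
The plan is to read this off as the corner point $\gamma_0=0$, $\gamma_1=1$ of the weighted sum rate problem (\ref{WSRM2}), for which Theorem \ref{theorem:covariance} already supplies the optimal covariance structure, and then to evaluate the resulting expression in closed form.

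First I would establish the first equality. Setting $\gamma_0=0$ and $\gamma_1=1$, the objective $\gamma_0\min\{R_{01},R_{02}\}+\gamma_1 R_1$ collapses to $R_1=\frac{1}{N+L}\log\det(\Id_N+\Hm_1\Sm_1\Hm_1^H)$, which does not depend on $\Sm_0$ at all. Since the map $\Sm_1\mapsto\log\det(\Id_N+\Hm_1\Sm_1\Hm_1^H)$ is nondecreasing in the positive semidefinite ordering while any trace spent on $\Sm_0$ is simply removed from the budget available to $\Sm_1$, the optimum sets $\Sm_0^\star=\zerov$ and allocates the entire budget to the confidential stream, i.e. $\trace(\Sm_1)\leq\overline{P}$. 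Consistently, specializing the Case 2 allocation (\ref{solution2}) to $\gamma_0=0$ gives $p_{0,i}=[-1/\lambda^{g0}_i]_+=0$. This is exactly the maximization in the first line of (\ref{secrecyrate}).

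Next I would turn the remaining problem into scalar waterfilling. By Lemma \ref{lemma:rank} the effective channel $\Hm_1=\Tc(\hv)\Vm_1$ has rank $l$ (here taken maximal, $l=L$), so I insert its SVD $\Hm_1=\Um_{h1}\Lambdam_{h1}\Vm_{h1}^H$ from (\ref{SVD}). The standard point-to-point MIMO argument --- concavity of $\log\det$ together with Hadamard's inequality --- shows the maximizer is diagonal in the right-singular basis, $\Sm_1^\star=\Vm_{h1}\diag(p_{1,1},\dots,p_{1,l})\Vm_{h1}^H$, whereupon $\log\det(\Id_N+\Hm_1\Sm_1\Hm_1^H)=\sum_{i=1}^{l}\log(1+\lambda^{h1}_i p_{1,i})$. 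Maximizing this sum subject to $\sum_i p_{1,i}\leq\overline{P}$ is precisely (\ref{solution2}) with $\gamma_1=1$, giving the waterfilling levels $p_{1,i}=[1/\mu-1/\lambda^{h1}_i]_+$. For every active mode one has $1+\lambda^{h1}_i p_{1,i}=\lambda^{h1}_i/\mu$, so the sum reduces to $\sum_i[\log(\lambda^{h1}_i/\mu)]_+$; renaming the water level $\mu\leftarrow 1/\mu$ reproduces the claimed closed form $\frac{1}{N+L}\sum_{i=1}^L\log(\mu\lambda^{h1}_i)_+$.

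There is no genuinely hard step here: the corollary is a bookkeeping specialization of Theorem \ref{theorem:covariance}. The only point needing a word of care is the eigenvector alignment of $\Sm_1^\star$ in the SVD basis of $\Hm_1$, but this is exactly the classical MIMO waterfilling fact already invoked in Case 2 of that theorem, so it may simply be cited. One should also flag the harmless notational overloading whereby $\mu$ in the final expression denotes the water level rather than the KKT multiplier appearing in (\ref{solution2}).
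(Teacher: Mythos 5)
Your proposal is correct and follows essentially the same route as the paper: the corollary is the $\gamma_0=0,\gamma_1=1$ corner point, resolved by the Case 2 allocation (\ref{solution2}) after the SVD of $\Hm_1$, with the whole power budget going to $\Sm_1$ and standard waterfilling giving the closed form. Your observations about $\Sm_0^\star=\zerov$ and the reuse of the symbol $\mu$ for the water level rather than the KKT multiplier are accurate fillings-in of details the paper leaves implicit.
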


Finally, by focusing the behavior of the achievable rate region in the high SNR regime, we characterize the achievable d.o.f. region of the frequency-selective BCC (\ref{Model}).
\begin{theorem}\label{theorem:DoF-2user}
The d.o.f. region of the frequency-selective BCC (\ref{Model}) with $(N+L)\times L$ Toeplitz matrices $\Tc(\hv),\Tc(\gv)$ is given as a union of $(r_0,r_1)= \frac{1}{N+L}(l_0,l)$ satisfying
\begin{eqnarray} \label{MIMOsdof}
&& l \leq L \\  \label{indMIMOlink}
&& l_0 + l \leq N
\end{eqnarray}
where $l_0,l$ denote non-negative integers. The Vandermonde precoding achieves the above d.o.f. region.
\end{theorem}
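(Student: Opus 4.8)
The plan is to prove the two inequalities (\ref{MIMOsdof})--(\ref{indMIMOlink}) as a converse valid for any scheme, and then to verify that the Vandermonde precoding of Theorem \ref{theorem:region} attains every integer pair $(l_0,l)$ obeying them; the two together pin down the region.

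For achievability I would specialize the rate region (\ref{Vandregion}). Fix integers $l\leq L$ and $l_0\leq N-l$, and take $\Vm_1\in\CC^{(N+L)\times l}$ as in Lemma \ref{lemma:rank}, so that $\rank(\Hm_1)=l$ while $\Gm_0=\Tc(\gv)\Vm_0$ and $\Hm_0=\Tc(\hv)\Vm_0$ both have rank $N$. Choose $\Sm_1$ of rank $l$ and $\Sm_0$ of rank $l_0$, with all nonzero eigenvalues growing linearly in $P$ and with the $l_0$-dimensional support of $\Sm_0$ selected inside the column space of $\Vm_0$ so that $\Tc(\hv)$ maps it to a subspace of $\CC^N$ disjoint from $\mathrm{range}(\Hm_1)$ — this is possible precisely because $l_0+l\leq N$. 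As $P\to\infty$ the definition (\ref{def:sdof}) then gives d.o.f. $\frac{l}{N+L}$ for $R_1$ (since $\rank(\Hm_1\Sm_1\Hm_1^H)=l$), d.o.f. $\frac{l_0}{N+L}$ for $R_{02}$ (since $\rank(\Gm_0\Sm_0\Gm_0^H)=l_0$), and d.o.f. $\frac{l_0}{N+L}$ for $R_{01}$ (its numerator and denominator having ranks $l_0+l$ and $l$). Hence $\min\{R_{01},R_{02}\}$ yields $r_0=\frac{l_0}{N+L}$ and $r_1=\frac{l}{N+L}$, the desired point.

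For the converse I would start from the secrecy capacity region (\ref{MIMOBCC}) with $\Hm=\Tc(\hv)$, $\Gm=\Tc(\gv)$, $\zerov\preceq\Km\preceq\Sm$ and $\trace(\Sm)\leq\overline{P}$. Bounding the $\min$ by its first argument and adding the two inequalities makes the ratio telescope,
\begin{equation*}
R_0+R_1\leq\frac{1}{N+L}\left[\log|\Id+\Hm\Sm\Hm^H|-\log|\Id+\Gm\Km\Gm^H|\right]\leq\frac{1}{N+L}\log|\Id+\Hm\Sm\Hm^H|,
\end{equation*}
whose right-hand side has d.o.f. at most $\frac{N}{N+L}$ because $\Hm$ has rank $N$; this is (\ref{indMIMOlink}). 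For (\ref{MIMOsdof}) I would bound $R_1$ by itself: in the high-SNR limit $\log|\Id+\Hm\Km\Hm^H|-\log|\Id+\Gm\Km\Gm^H|$ behaves like $[\rank(\Hm\Km^{1/2})-\rank(\Gm\Km^{1/2})]\log P$, and with $W=\mathrm{range}(\Km^{1/2})$ this rank difference equals $\dim(W\cap\ker\Gm)-\dim(W\cap\ker\Hm)\leq\dim(\ker\Gm)=L$. Hence $r_1\leq\frac{L}{N+L}$, i.e. $l\leq L$.

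The step I expect to be most delicate is not the algebra but the rank bookkeeping forced by the Toeplitz/Vandermonde structure, which is what makes the achievable and converse regions coincide. One must check that Lemma \ref{lemma:rank} indeed delivers $\rank(\Hm_1)=l$ up to $l=L$ — equivalently that $\ker\Tc(\hv)\cap\ker\Tc(\gv)=\{\zerov\}$ for linearly independent $\hv,\gv$, so that $\Tc(\hv)$ is injective on the $L$-dimensional kernel of $\Tc(\gv)$ — and that in the achievability the supports of $\Sm_0,\Sm_1$ can be chosen so that $\mathrm{range}(\Hm_0\Sm_0^{1/2})$ and $\mathrm{range}(\Hm_1\Sm_1^{1/2})$ meet only at the origin while fitting inside $\CC^N$. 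It is exactly here that the two geometric facts — $\ker\Tc(\gv)$ has dimension $L$, and each receiver observes an $N$-dimensional signal — become the constraints $l\leq L$ and $l_0+l\leq N$.
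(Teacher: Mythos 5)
Your proof is correct and shares the paper's skeleton---achievability by specializing Theorem \ref{theorem:region} together with the rank facts of Lemma \ref{lemma:rank}, converse by two outer bounds---but the execution differs at both ends. For achievability the paper simply puts equal power on all $N+L$ streams ($\Sm_0=P\Id_{N+L-l}$, $\Sm_1=P\Id_l$) and reads off the pre-logs from $\rank(\Tc(\hv)\Vm_1)=l$, $\rank(\Tc(\gv)\Vm_0)=N$ and $\rank(\Tc(\hv)\Vm\Vm^H\Tc(\hv)^H)=N$; this lands directly on the Pareto-dominant corner $(l_0,l)=(N-l,l)$, and every other admissible pair is dominated, so no further construction is needed. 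Your low-rank covariances hit each $(l_0,l)$ individually, which is fine but obliges you to verify two generic-position conditions you only partly state: besides $\Hm_0$ mapping the support of $\Sm_0$ onto an $l_0$-dimensional subspace transverse to $\mathrm{range}(\Hm_1)$, you also need $\Gm_0$ to be injective on that support (its kernel has dimension $L-l$), or else $R_{02}$ loses pre-log and the min collapses. For the converse the paper merely cites the $L$ s.d.o.f.\ of the MIMO wiretap channel and the $N$-stream limit of $\Tc(\hv)$; your derivation from (\ref{MIMOBCC}) is more explicit, and the telescoping bound $R_0+R_1\le\frac{1}{N+L}\log|\Id+\Hm\Sm\Hm^H|$ is a clean way to get $l_0+l\le N$. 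The only soft spot is the claim that $\log|\Id+\Hm\Km\Hm^H|-\log|\Id+\Gm\Km\Gm^H|$ has pre-log $\rank(\Hm\Km^{1/2})-\rank(\Gm\Km^{1/2})$: this holds when $\Km$ scales uniformly with $P$, but the optimizing $\Km$ may have eigenvalues growing at different rates with $P$, so a fully rigorous version needs the general MIMO wiretap s.d.o.f.\ result, which the paper invokes by citation in any case.
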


\begin{figure}[t]
    \centering
    \includegraphics[width=0.4\columnwidth]{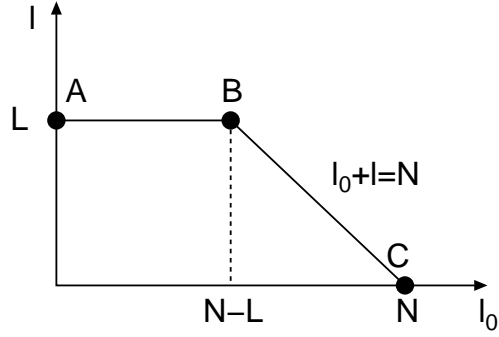}\\
    \vspace{-1em}
    \caption{d.o.f. region $(l_0,l_1)$ of frequency-selective BCC.}
    \label{fig:2userDoF}
\end{figure}
\begin{proof}
The achievability follows rather trivially by applying Theorem \ref{theorem:region}.
By considering equal power allocation over all $N+L$ streams such that $\Sm_0=P\Id_{N+L-l}, \Sm_1=P\Id_l$, we obtain the rate tuple $(R_0,R_1)$ where $R_0\leq \min(R_{01},R_{02})$
\begin{eqnarray*}
R_{01}&=& \frac{1}{N+L}\log\frac{|\Id_N + P \Tc(\hv) \Vm_0\Vm_0^H\Tc(\hv)^H + P\Tc(\hv) \Vm_1 \Vm_1^H \Tc(\hv)^H|}
{|\Id_N +P\Tc(\hv) \Vm_1 \Vm_1^H \Tc(\hv)^H|}\\
 R_{02}&=&  \frac{1}{N+L}\log\left|\Id_N +P\Tc(\gv) \Vm_0 \Vm_0^H\Tc(\gv)^H\right|\\  \label{EqualPowerSecRate}
R_1 &\leq &  \frac{1}{N+L}\log\left|\Id_{N} + P\Tc(\hv) \Vm_1\Vm_1^H\Tc(\hv)^H\right|
\end{eqnarray*}
We first notice that the pre-log factor of $\log|\Id + P\Am|$ as $P\rightarrow \infty$ depends only on the rank of $\Am$. From Lemma \ref{lemma:rank}, we obtain
\begin{eqnarray}\label{R1}
\rank(\Tc(\hv) \Vm_1\Vm_1^H\Tc(\hv)^H) &=& \rank(\Tc(\hv) \Vm_1)= l \\ \nonumber
\rank(\Tc(\gv) \Vm_0 \Vm_0^H\Tc(\gv)^H) &=& \rank(\Tc(\gv)\Vm_0) \\ \nonumber
&\overset{\mathrm{(a)}}=& \rank(\Tc(\gv)[\Vm_0 \Vm_1]) \\ \label{R02}
&\overset{\mathrm{(b)}}=& \rank(\Tc(\gv))=N\\ \nonumber
\rank(\Tc(\hv) (\Vm_0\Vm_0^H+\Vm_1\Vm_1^H)\Tc(\hv)^H) &=& \rank\left(\Tc(\hv)\Vm\Vm^H\Tc(\hv)^H\right)\\ \label{R01}
         &\overset{\mathrm{(b)}} =& \rank(\Tc(\hv))=N
\end{eqnarray}
where (a) follows from orthogonality between $\Tc(\gv)$ and $\Vm_1$, (b)
follows from the fact that $\Vm=[\Vm_0\Vm_1]$ is unitary satisfying $
\Vm\Vm^H=\Id$. Notice that (\ref{R1}) yields $r_1=\frac{l}{N+L}$. For the d.o.f. $r_0=\frac{l_0}{N+L}$ of the common message, (\ref{R1}) and (\ref{R01}) yield
\begin{equation}
l_0= \rank(\Tc(\hv) (\Vm_0\Vm_0^H+\Vm_1\Vm_1^H)\Tc(\hv)^H)-\rank(\Tc(\hv) \Vm_1\Vm_1^H\Tc(\hv)^H)=N-l
\end{equation}
which is dominated by the pre-log of $R_{02}$ in (\ref{R02}). This establishes the achievability.

The converse follows by noticing that the inequalities (\ref{MIMOsdof}) and (\ref{indMIMOlink}) correspond to trivial upper bounds.
The first inequality (\ref{MIMOsdof}) corresponds to the s.d.o.f. of the MIMO wiretap channel with the legitimate channel $\Tc(\hv)$ and the eavesdropper channel $\Tc(\gv)$, which is bounded by $L$. The second inequality (\ref{indMIMOlink}) follows because the total number of streams for receiver 1 cannot be larger than the d.o.f. of $\Tc(\hv)$, i.e. $N$.
\end{proof}

Fig. \ref{fig:2userDoF} illustrates the region $(l,l_0)$ of the frequency-selective BCC over $N+L$ dimensions. We notice that the s.d.o.f. constraint (\ref{MIMOsdof}) yields the line segment A, B while the constraint (\ref{indMIMOlink}) in terms of the total number of streams for receiver 1
yields the line segment B,C.

\section{Multi-user secure communications}\label{sec:multiuser}
In this section, we provide some applications of the Vandermonde precoding in the multi-user secure communication scenarios where the transmitter wishes to send confidential messages to more than one intended receivers. The scenarios that we address are ; a) a $K+1$-user frequency-selective BCC with $K$ confidential messages and one common message, b) a two-user frequency-selective BCC with two confidential messages and one common message.
For each scenario, by focusing on the behavior in the high SNR regime, we characterize the achievable s.d.o.f. region and show the optimality of the Vandermonde precoding.

\subsection{$K$+1-user BCC with $K$ confidential messages}\label{subsec:BCCouter}
As an extension of Section \ref{sec:vandermonde}, we consider the $K+1$-user frequency-selective BCC where the transmitter sends $K\leq L$ confidential messages $W_1,\dots,W_K$ to the first $K$ receivers as well as one common message $W_0$ to all receivers. Each of the confidential messages must be kept secret to receiver $K+1$.
Notice that this model, called multi-receiver wiretap channel, has been studied in the literature (\cite{bagherikaram2008srr,bagherikaram2009scr,ekrem2008scc,ekrem2009scr,choo2009trb,choo2008krb} and reference therein). In particular, the secrecy capacity region of the Gaussian MIMO multi-receiver wiretap channel has been characterized in \cite{bagherikaram2009scr,ekrem2009scr} for $K=2$, an arbitrary $K$, respectively, where the optimality of the S-DPC is proved.

The received signal $\yv_k$ of receiver $k$ and the received signal $\zv$ of receiver $K+1$ at any block are given by
\begin{eqnarray}\label{BCCwKusers}
\yv_k &=& \Tc(\hv_k) \xv + \nv_k, \;\; k=1,\dots,K \\
\zv &=& \Tc(\gv) \xv + \nuv
\end{eqnarray}
where $\xv$ is the transmit vector satisfying the total power constraint and $\nv_1,\dots,\nv_K,\nuv$ are mutually independent AWGN with covariance $\Id$.
We assume that the $K+1$ vectors $\hv_1,\dots,\hv_K,\gv$ of length $L+1$ are linearly independent and perfectly known to all the terminals.
As an extension of the frequency-selective BCC in Section \ref{sec:model}, we say that the rate tuple $(R_0,R_1,\dots,R_K)$ is achievable if for any $\epsilon>0$ there exists a sequence of codes $(2^{nR_0}, 2^{nR_1},\dots,2^{nR_K}, n)$
such that
\begin{eqnarray}
&& P_e^n\leq \epsilon \\ \label{SecrecyRatek}
&& \sum_{k\in \Kc} R_k -\frac{1}{n} H(W_{\Kc}|Z^n) \leq \epsilon, \;\; \Kc\subseteq \{1,\dots,K\}
\end{eqnarray}
where we denote $W_{\Kc}=\{\forall k\in \Kc, W_k\}$ and define
\begin{equation}
    P_e^n = \frac{1}{\prod_{k=0}^K 2^{nR_k}}\sum_{w_0\in \Wc_0}\dots\sum_{w_K\in \Wc_K} \Pr\left(\cup_{k=1}^K (\hat{w}_0^{(k)},\hat{w}_k) \neq (w_0,w_k)  \right).
\end{equation}
An achievable secrecy rate region $(R_1,R_2)$ for the case of $K=2$, when the transmitter sends two confidential messages in the presence of an external eavesdropper, is provided in \cite[Theorem 1]{bagherikaram2008srr}. This theorem can be extended to an arbitrary $K$ while including the common message. Formally we state the following lemma.
\begin{lemma}\label{lemma:region}
An achievable rate region of the $K$+1-user BCC, where the transmitter sends $K$ confidential messages intended to the first $K$ receivers as well as a common message to all users, is given as a union of all non-negative rate-tuple satisfying
\begin{eqnarray}\nonumber
        R_0 &\leq &\min\{I(U;Z), \min_{k} I(U;Y_k)\} \\ \nonumber
        R_k & \leq & I(V_k;Y_k|U) - I(V_k;Z|U), \;\;\; k=1,\dots,K \\  \nonumber
        \sum_{k\in\Kc} R_k &\leq & \sum_{k\in\Kc} I(V_k;Y_k|U) - \sum_{j=2}^{|\Kc|}I(V_{\pi(j)}; V_{\pi(1)},\dots,V_{\pi(j-1)}|U)  - I(V_{\Kc} ; Z|U),\\ \label{RegionKuserBCC}
        && \; \;\forall \Kc \subseteq \{1,\dots,K\}, \forall \pi
\end{eqnarray}
where $\pi$ denotes a permutation over the subset $\Kc$, $|\Kc|$ denotes the cardinality of $\Kc$, we let $V_{\Kc}=\{\forall k\in \Kc, V_k\}$,  and the random variables $U,V_1,\dots,V_K,X,Y_1,\dots,Y_K,Z$ satisfy the Markov chain
\begin{equation}\label{MarkovKuser}
 U,V_1,\dots,V_K\rightarrow X \rightarrow Y_1,\dots,Y_K,Z.
\end{equation}
\end{lemma}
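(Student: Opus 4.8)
The plan is to establish achievability by a random-coding argument that combines three ingredients: superposition coding to carry the common message $W_0$ in the cloud center $U$; Marton (multivariate covering) coding to embed the $K$ confidential messages in the satellite codewords $V_1,\dots,V_K$, since $X$ is generated jointly from all of them; and a double-binning (stochastic encoding) layer that supplies the randomization needed to confuse receiver $K+1$. Concretely, I would fix a distribution factoring as in (\ref{MarkovKuser}), generate $2^{nR_0}$ cloud centers $u^n(w_0)$ i.i.d. from $\prod_i p(u_i)$, and for each $w_0$ and each $k$ generate a codebook of $2^{n(R_k+\tilde R_k)}$ satellites $v_k^n(w_0,w_k,l_k)$ i.i.d. from $\prod_i p(v_{k,i}\mid u_i)$, where $w_k$ indexes the message and $l_k\in\{1,\dots,2^{n\tilde R_k}\}$ is a dummy randomization index whose rate $\tilde R_k$ is to be chosen.

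For encoding, given $(w_0,w_1,\dots,w_K)$ the transmitter selects $u^n(w_0)$ and then searches over the randomization indices $(l_1,\dots,l_K)$ for a tuple of satellites $(v_1^n,\dots,v_K^n)$ jointly typical with $u^n$; the multivariate covering lemma guarantees such a tuple exists with high probability provided the randomization rates obey the Marton conditions $\sum_{k\in\Kc}\tilde R_k > \sum_{j=2}^{|\Kc|} I(V_{\pi(j)};V_{\pi(1)},\dots,V_{\pi(j-1)}\mid U)$ for every subset $\Kc$ and permutation $\pi$. The transmitter then draws $x^n$ from $\prod_i p(x_i\mid u_i,v_{1,i},\dots,v_{K,i})$. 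Reliability is handled by standard joint-typicality decoding: receiver $k$ recovers $(w_0,w_k,l_k)$ with vanishing error if $R_0\le I(U;Y_k)$ and $R_k+\tilde R_k\le I(V_k;Y_k\mid U)$, while receiver $K+1$ decodes only the common message, requiring $R_0\le I(U;Z)$; taking the minimum over receivers yields the common-rate bound on $R_0$.

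The crux of the argument is the equivocation analysis, which must verify $\tfrac1n H(W_{\Kc}\mid Z^n)\ge \sum_{k\in\Kc}R_k-\epsilon$ simultaneously for every $\Kc\subseteq\{1,\dots,K\}$. Here I would condition on the cloud center $U^n$ (which is decodable at the eavesdropper) and show that the collection of randomization indices acts as a wiretap key: when $\sum_{k\in\Kc}\tilde R_k$ exceeds $I(V_{\Kc};Z\mid U)$ plus the covering penalty already paid at the encoder, the a posteriori law of the satellites $\{v_k^n:k\in\Kc\}$ seen through $Z^n$ becomes essentially independent of the messages $W_{\Kc}$. The standard tool is a bound $I(W_{\Kc};Z^n\mid U^n)\le n\epsilon$ obtained from a list-size/joint-typicality estimate on the number of codeword tuples compatible with $Z^n$, exactly as in the single-eavesdropper derivation of \cite[Theorem 1]{bagherikaram2008srr}. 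The delicate point is that the satellites are correlated through the Marton selection, so the effective randomization available to hide $W_{\Kc}$ is reduced by precisely the covering term $\sum_{j=2}^{|\Kc|} I(V_{\pi(j)};V_{\pi(1)},\dots,V_{\pi(j-1)}\mid U)$, and this bookkeeping must be carried out uniformly over all subsets and permutations---this is the main obstacle, and the reason the joint-secrecy constraints take their stated form.

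Finally, I would combine the decoding constraints $R_k+\tilde R_k\le I(V_k;Y_k\mid U)$ with the secrecy/covering constraints $\sum_{k\in\Kc}\tilde R_k\ge \sum_{j=2}^{|\Kc|} I(V_{\pi(j)};V_{\pi(1)},\dots,V_{\pi(j-1)}\mid U)+I(V_{\Kc};Z\mid U)$ and eliminate the auxiliary rates $\tilde R_1,\dots,\tilde R_K$ by Fourier--Motzkin elimination. Subtracting the per-subset secrecy requirement from the sum of the per-message decoding bounds yields exactly (\ref{RegionKuserBCC}), with the single-message case $|\Kc|=1$ (no covering penalty) reducing to $R_k\le I(V_k;Y_k\mid U)-I(V_k;Z\mid U)$. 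This recovers \cite[Theorem 1]{bagherikaram2008srr} when $K=2$ and specializes to the Csisz\'ar--K\"orner region (\ref{Csiszar}) when $K=1$, confirming the claimed extension to arbitrary $K$ with a common message.
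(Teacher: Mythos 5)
Your proposal follows essentially the same route as the paper's Appendix C: superposition coding for $W_0$, Marton-type joint-typicality selection of the satellites $V_1,\dots,V_K$, double-indexed codebooks whose randomization indices simultaneously supply the covering freedom and the wiretap randomization, and an equivocation bound obtained by conditioning on $U^n$ and invoking Fano's inequality for the eavesdropper's decoding of $V_{\Kc}^n$ given $W_{\Kc}$. The only cosmetic difference is that you keep the randomization rates $\tilde R_k$ free and eliminate them by Fourier--Motzkin, whereas the paper fixes the per-bin codeword count directly to $2^{n(I(V_k;Z|U)-\epsilon)}$; both yield the region (\ref{RegionKuserBCC}).
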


\begin{proof}
    Appendix \ref{appendix:achievability}.
\end{proof}
Notice that the second term of the last equation in (\ref{RegionKuserBCC}) can be also expressed by
\begin{eqnarray}\nonumber
         \sum_{j=2}^{|\Kc|}I(V_{\pi(j)}; V_{\pi(1)},\dots,V_{\pi(j-1)}|U) = \sum_{k\in \Kc}H(V_k|U) - H(V_{\Kc}|U). \; \;\forall \Kc \subseteq \{1,\dots,K\}, \forall \pi
\end{eqnarray}
It can be easily seen that without the secrecy constraint the above region reduces to the Marton's achievable region for the general $K$-user broadcast channel \cite{viswanath2003scv}.

In order to focus on the behavior of the region in the high SNR regime, we define the s.d.o.f. region as
\[ r_0 = \lim_{P\rightarrow \infty}\frac{R_0}{\log P }, \;\; r_k = \lim_{P\rightarrow\infty}\frac{R_k}{\log P }, \;\;k=1,\dots,K \]
where $r_0$ denotes the d.o.f. of the common message and
$r_k$ denotes the s.d.o.f. of confidential message $k$. As an extension of Theorem \ref{theorem:DoF-2user}, we have the following s.d.o.f. region result.
\begin{theorem}\label{theorem:DoFouterBCC}
The s.d.o.f. region of the $K+1$-user frequency-selective BCC (\ref{BCCwKusers}) is a union of
$(r_0,r_1,\dots,r_K)=\frac{1}{N+L}(l_0,l_1,\dots,l_K)$ satisfying
\begin{eqnarray} \label{CooperativeBound1}
    && \sum_{k=1}^K l_k \leq L \\ \label{CooperativeBound2}
    &&  l_0 + \sum_{k=1}^K l_k \leq N
\end{eqnarray}
where $\{l_0,l_1,\dots,l_K\}$ are non-negative integers. The Vandermonde precoding achieves this region.
\end{theorem}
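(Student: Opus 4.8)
The plan is to establish the two directions separately, extending the two-user argument of Theorem \ref{theorem:DoF-2user}: achievability by feeding a multi-user Vandermonde precoder into the Marton-type region of Lemma \ref{lemma:region}, and a converse consisting of two outer bounds that match the claimed region.

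\textbf{Achievability.} First I would build the precoder. By Property \ref{def-Vand} the null space of the full-rank Toeplitz matrix $\Tc(\gv)$ has dimension $L$, so I can pick a unitary Vandermonde matrix $\Vm_1\in\CC^{(N+L)\times L}$ with $\Tc(\gv)\Vm_1=\zerov$ and split its columns into blocks $\Vm_1=[\Vm_{1,1},\dots,\Vm_{1,K}]$, assigning $l_k$ columns to confidential message $k$ with $\sum_k l_k\le L$. I take $\Vm_0$ to carry $l_0$ common streams in the orthogonal complement with $l_0+\sum_k l_k\le N$, and transmit $\xv=\Vm_0\uv_0+\sum_k \Vm_{1,k}\uv_k$ with independent Gaussian inputs at equal power $P$. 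Choosing the auxiliaries $U=\Vm_0\uv_0$ and $V_k=\Vm_{1,k}\uv_k$ in Lemma \ref{lemma:region}, the orthogonality $\Tc(\gv)\Vm_1=\zerov$ immediately gives $I(V_k;Z\mid U)=0$ and $I(V_{\Kc};Z\mid U)=0$, so the secrecy penalties vanish.

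Then I would read off the d.o.f. of every term of (\ref{RegionKuserBCC}) in the high-SNR limit, since each $\log\det(\Id+P\Am)$ scales as $\rank(\Am)\log P$. Using a rank computation in the same spirit as Lemma \ref{lemma:rank} (linear independence of $\hv_k,\gv$), one has $\rank(\Tc(\hv_k)\Vm_1)=\sum_j l_j$ and $\rank(\Tc(\hv_k)[\Vm_{1,j}]_{j\neq k})=\sum_{j\neq k}l_j$, so $I(V_k;Y_k\mid U)$ has d.o.f. $l_k$. Each binning term $\sum_{k\in\Kc}H(V_k\mid U)-H(V_{\Kc}\mid U)$ has zero d.o.f. because the $V_k$ are independent, hence the sum-rate constraints are not binding beyond the individual ones. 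For the common message, $I(U;Z)$ has d.o.f. $l_0$ (as $l_0\le N$), whereas $I(U;Y_k)$ has d.o.f. $\min(l_0+\sum_j l_j,N)-\sum_j l_j$, which equals $l_0$ exactly when $l_0+\sum_j l_j\le N$. This yields the rate tuple $(l_0,l_1,\dots,l_K)$ and shows that both (\ref{CooperativeBound1}) and (\ref{CooperativeBound2}) are simultaneously achievable.

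\textbf{Converse.} Bound (\ref{CooperativeBound1}) is the clean one: I let receivers $1,\dots,K$ cooperate into a single legitimate receiver with stacked channel $[\Tc(\hv_1);\dots;\Tc(\hv_K)]$ of rank $N+L$, obtaining a MIMO wiretap channel whose eavesdropper $\Tc(\gv)$ has rank $N$; its secrecy d.o.f., equal to the dimension $L$ of the eavesdropper's null space, upper-bounds $\sum_k l_k$ since cooperation only enlarges the region. Bound (\ref{CooperativeBound2}) is where the difficulty lies: full cooperation gives only the loose $l_0+\sum_k l_k\le N+L$, and a per-receiver Fano bound at receiver $k$ gives only $l_0+l_k\le N$. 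To obtain the sum bound $l_0+\sum_k l_k\le N$ one must exploit that the $\sum_{j\neq k}l_j$ confidential streams intended for the other receivers are, by their simultaneous secrecy from receiver $K+1$ and decodability at their own receivers, irreducible interference occupying distinct dimensions of receiver $k$'s rank-$N$ observation, so that the common message can carry at most $N-\sum_j l_j$ d.o.f. there. I would make this precise through a genie-aided argument that forces a single legitimate receiver to resolve the common message together with all confidential streams within its $N$-dimensional observation, and I expect this step---turning the dimension-counting intuition into a rigorous outer bound---to be the principal obstacle; the rest is routine once it is in place.
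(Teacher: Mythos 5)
Your achievability argument and your converse for (\ref{CooperativeBound1}) follow the paper's proof (Appendix E) essentially step for step: the paper likewise splits a unitary Vandermonde basis of the null space of $\Tc(\gv)$ among the $K$ confidential messages, invokes exactly the rank statement you anticipate (stated there as Lemma \ref{lemma:rank2} and proved by the same projection argument as Lemma \ref{lemma:rank}), evaluates Lemma \ref{lemma:region} with independent Gaussian inputs and equal power, notes that the binning penalty and the leakage terms vanish, and bounds $\sum_k l_k$ by letting the first $K$ receivers cooperate into a $KN$-antenna wiretap receiver against $\Tc(\gv)$. Two small differences in your version: the paper takes $V_k=U+\Vm_k\uv_k$ and lets $\Vm_0$ span the entire $(N+L-\sum_k l_k)$-dimensional orthogonal complement rather than only $l_0$ columns, which makes $\rank(\Tc(\gv)\Vm_0)=N$ automatic and spares you from checking that your $l_0$ common columns avoid the residual $L-\sum_k l_k$ null directions of $\Tc(\gv)$; and your rank claim for the non-square matrix $\Tc(\hv_k)[\Vm_0\,\Vm_{1,1}\cdots\Vm_{1,K}]$ would need the same justification you give for the confidential part.

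The genuine divergence is the converse of (\ref{CooperativeBound2}), which you explicitly leave unproved. The paper does not use a genie-aided argument there: it writes down the $K$ per-receiver bounds $l_0+l_k\le N$ (total streams decodable by receiver $k$ cannot exceed $\rank(\Tc(\hv_k))=N$), specializes to the point where one $l_k$ equals $L$ to get $l_0\le N-L$, and adds this to (\ref{CooperativeBound1}) to conclude $l_0+\sum_k l_k\le N$. You are right to be suspicious that the per-receiver bounds together with $\sum_k l_k\le L$ do not by themselves imply the sum bound: for $N=4$, $L=2$, $K=2$ the tuple $(l_0,l_1,l_2)=(3,1,1)$ satisfies $l_0+l_k\le 4$ and $l_1+l_2\le 2$ yet violates (\ref{CooperativeBound2}), and the paper's inequality $l_0\le N-L$ is derived only at a corner point rather than for an arbitrary achievable tuple. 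So the difficulty you flag is real and the paper's two-line derivation is itself terse on exactly this step. Nevertheless, measured against the task of proving the theorem, your proposal stops at the key converse inequality: the genie-aided dimension-counting argument you gesture at is not constructed, and without it (or the paper's combination of the per-receiver and cooperative bounds) the converse of (\ref{CooperativeBound2}) remains a gap in your proof.
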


\begin{proof}
    Appendix \ref{appendix:SDOF1}.
\end{proof}

\begin{figure}[t]
    \centering
    \includegraphics[width=0.4\columnwidth]{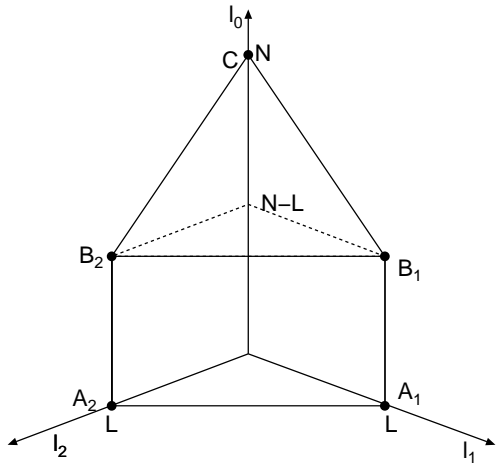}\\
    \vspace{-1em}
    \caption{s.d.o.f. region $(l_0,l_1,l_2)$ over $N+L$ dimensions of three-user frequency-selective BCC.}
    \label{fig:3userDoF}
\end{figure}

Fig. \ref{fig:3userDoF} illustrates the region $(l_0,l_1,l_2)$ for the case of $K=2$ confidential messages. It can be easily seen that the constraint (\ref{CooperativeBound2}) in terms of the total number of streams for the virtual receiver yields the subspace $C,B_1,B_2$ while the s.d.o.f. constraint (\ref{CooperativeBound1}) for the virtual receiver yields the subspace $A_1, A_2,B_2,B_1$. We remark that for the special case of one confidential message and one common message ($K=1$), the region reduces to Fig. \ref{fig:2userDoF}.

\begin{figure}[t]
    \centering
    \includegraphics[width=0.4\columnwidth]{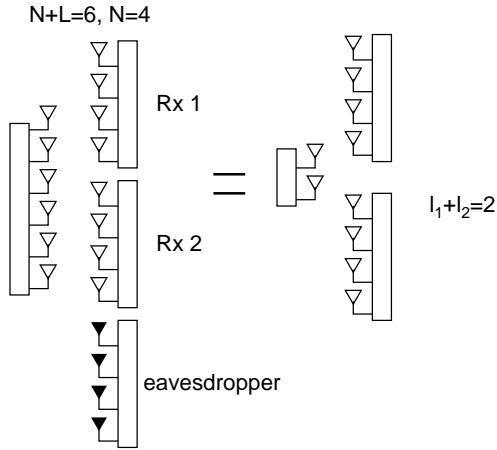}\\
    \vspace{-1em}
    \caption{Equivalent MIMO interpretation for three-user frequency-selective BCC with two confidential messages.}
    \label{fig:Interprete1}
\end{figure}

\begin{remark}
When only the $K$ confidential messages are transmitted to the $K$ intended receivers in the presence of the eavesdropper, the s.d.o.f. region has the equivalent MIMO interpretation \cite{lee2007hsa}. More specifically, the frequency-selective BCC (\ref{BCCwKusers}) is equivalent to the MIMO-BCC where the transmitter with $N+L$ dimensions (antennas) sends messages to $K$ receivers with $N$ antennas each in the presence of the eavesdropper with $N$ antennas. The secrecy constraint (orthogonal constraint) consumes $N$ dimensions of the channel seen by the virtual receiver and lets the number of effective transmit antennas be $L$. The resulting channel is the MIMO-BC without secrecy constraint with $L$ transmit antennas and $K$ receivers with $N$ antennas each, whose multiplexing gain is $\min(L,KN)=L$ (we assume $L<N$). Fig. \ref{fig:Interprete1} illustrates the example with $K=2,N=4,L=2$.
\end{remark}
\subsection{Two-user BCC with two confidential messages} \label{subsec:BCCinner}
We consider the two-user BCC where the transmitter sends two confidential messages $W_1,W_2$ as well as one common message $W_0$. Each of the confidential messages must be kept secret to the unintended receiver. This model has been studied in \cite{liu2008dmi,LiuPoorIT09,LiuLiuPoorShamaiISIT09} for the case of two confidential messages and in \cite{choo2008krb} for the case of two confidential messages and a common message. In \cite{LiuLiuPoorShamaiISIT09}, the secrecy capacity region of the MIMO Gaussian BCC was characterized.
The received signal at receiver 1, 2 at any block is given respectively by
\begin{eqnarray}\label{two-userBCC}
\yv_1 &=& \Tc(\hv_1) \xv + \nv_1  \\ \nonumber
\yv_2 &=& \Tc(\hv_2) \xv + \nv_2
\end{eqnarray}
where $\xv$ is the input vector satisfying the total power constraint and $\nv_1,\nv_2$ are mutually independent AWGN with covariance $\Id_N$. We assume the channel vectors $\hv_1,\hv_2$ are linearly independent.

We say that the rate tuple $(R_0,R_1,R_2)$ is achievable if for any $\epsilon>0$ there exists a sequence of codes $(2^{nR_0}, 2^{nR_1},2^{nR_2}, n)$
such that
\begin{eqnarray}
&&P_e^n\leq \epsilon \\ \label{SecrecyRateInnerk}
&&R_1 -\frac{1}{n} H(W_1|Y_2^n) \leq\epsilon , \;\; R_2 -\frac{1}{n} H(W_2|Y_1^n) \leq \epsilon
\end{eqnarray}
where we define the average error probability as
\begin{equation}
    P_e^n = \frac{1}{\prod_{k=0}^2 2^{nR_k}}\sum_{w_0\in \Wc_0}\sum_{w_1\in \Wc_1}\sum_{w_2\in \Wc_2} \Pr\left((\hat{w}_0^{(1)},\hat{w}_1) \neq (w_0,w_1) \cup (\hat{w}_0^{(2)},\hat{w}_2) \neq (w_0,w_2) \right)
\end{equation}
where $(\hat{w}_0^{(1)},\hat{w}_1), (\hat{w}_0^{(2)},\hat{w}_2)$ is the output of decoder 1, 2, respectively.
A secrecy achievable rate region of the two-user BCC with two confidential messages and a common message is given by \cite[Theorem 1]{choo2008krb}
\begin{eqnarray}\label{two-userBCCRate}
R_0 &\leq &\min\{ I(U;Y_1), I(U; Y_2)\} \\ \nonumber
R_1 &\leq & I(V_1;Y_1|U) - I(V_1;Y_2,V_2|U) \\ \nonumber
R_2 &\leq & I(V_2;Y_2|U) - I(V_2;Y_1,V_1|U)
\end{eqnarray}
where the random variables satisfy the Markov chain
\begin{equation}
 U, V_1,V_2 \rightarrow X \rightarrow Y_1, Y_2.
\end{equation}

We extend Theorem \ref{theorem:DoF-2user} to the two-user frequency-selective BCC (\ref{two-userBCC}) and obtain the following s.d.o.f. result.
\begin{theorem}\label{theorem:DoFinnerBC}
The s.d.o.f. region of the two-user frequency-selective BCC (\ref{two-userBCC}) is a union of $(r_0,r_1,r_2)=\frac{1}{N+L}(l_0,l_1,l_2)$ satisfying
\begin{eqnarray}\label{peruserSDoF}
    &&  l_k\leq L, \;\; k=1,2 \\ \label{MIMOconst}
    &&  l_0 + l_k \leq N, \;\; k=1,2
\end{eqnarray}
where $\{l_0,l_1,l_2\}$ are non-negative integers. The Vandermonde precoding achieves the region.
\end{theorem}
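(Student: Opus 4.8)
The plan is to mirror the proof of Theorem \ref{theorem:DoF-2user}, separating the argument into achievability by Vandermonde precoding and a converse by elementary dimension counting. For achievability I would build two confidential precoders together with one common precoder: let $\Vm_1\in\CC^{(N+L)\times l_1}$ be a unitary Vandermonde matrix lying in the null space of $\Tc(\hv_2)$ (so that $W_1$ is invisible to receiver 2) and $\Vm_2\in\CC^{(N+L)\times l_2}$ a unitary Vandermonde matrix in the null space of $\Tc(\hv_1)$ (so that $W_2$ is invisible to receiver 1), each with $l_k\leq L$ columns drawn from the roots of the corresponding channel polynomial as in Property \ref{def-Vand}. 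Choosing $\Vm_0\in\CC^{(N+L)\times l_0}$ for the common message, I transmit $\xv=\Vm_0\uv_0+\Vm_1\uv_1+\Vm_2\uv_2$ with independent Gaussian streams at equal power $P$, and evaluate the achievable region (\ref{two-userBCCRate}) with $U=\Vm_0\uv_0$ and $V_k=U+\Vm_k\uv_k$.

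The key simplification is that the orthogonality conditions $\Tc(\hv_2)\Vm_1=\zerov$ and $\Tc(\hv_1)\Vm_2=\zerov$ make the secrecy penalty terms vanish: since $\yv_2$ carries no function of $\uv_1$ and $\uv_1$ is independent of $\uv_2$, the pair $(Y_2,V_2)$ is conditionally independent of $V_1$ given $U$, so $I(V_1;Y_2,V_2|U)=0$, and symmetrically $I(V_2;Y_1,V_1|U)=0$. Hence $R_k\leq I(V_k;Y_k|U)$, whose pre-log equals $\rank(\Tc(\hv_k)\Vm_k)$. Invoking the rank argument of Lemma \ref{lemma:rank} (valid because $\hv_1,\hv_2$ are linearly independent, so the null spaces of $\Tc(\hv_1)$ and $\Tc(\hv_2)$ meet only at the origin), this rank is $l_k$, giving $r_k=l_k/(N+L)$. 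For the common message the pre-log at receiver $k$ is the rank increment $\rank(\Tc(\hv_k)[\Vm_0,\Vm_k])-\rank(\Tc(\hv_k)\Vm_k)=l_0$, which cannot exceed $\rank(\Tc(\hv_k))=N$ minus $l_k$; taking the minimum over the two receivers, as required by $R_0\leq\min\{I(U;Y_1),I(U;Y_2)\}$, yields $r_0=l_0/(N+L)$ whenever $l_0+l_k\leq N$ for $k=1,2$. This shows every integer tuple obeying (\ref{peruserSDoF})--(\ref{MIMOconst}) is achievable; note these constraints already force $l_0+l_1+l_2\leq N+L$, since $l_0+l_1\leq N$ together with $l_0+l_1+l_2>N+L$ would give $l_2>L$, so the three precoders always fit into the $N+L$ transmit dimensions.

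For the converse I would use two families of trivial outer bounds, exactly as in Theorem \ref{theorem:DoF-2user}. First, treating $W_k$ in isolation as a MIMO wiretap problem with legitimate channel $\Tc(\hv_k)$ and eavesdropper channel $\Tc(\hv_{3-k})$, the secrecy d.o.f. is bounded by $L$, which gives $l_k\leq L$. Second, receiver $k$ must reliably decode both the common message and its own confidential message, so the total number of decodable streams cannot exceed the d.o.f. $\rank(\Tc(\hv_k))=N$ of its channel, giving $l_0+l_k\leq N$. Together these reproduce (\ref{peruserSDoF}) and (\ref{MIMOconst}), matching the achievable region.

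The hard part will be the simultaneous feasibility of the three precoders with all rank identities holding at once. I must choose $\Vm_0,\Vm_1,\Vm_2$ so that $\rank(\Tc(\hv_k)[\Vm_0,\Vm_k])=l_0+l_k$ for both $k=1,2$, i.e. the images of the common directions remain linearly independent from those of the confidential directions at each receiver; establishing this cleanly, rather than merely generically, is the delicate point and is precisely where the linear independence of $\hv_1,\hv_2$ and the generic position of the Vandermonde roots must be exploited. A secondary point needing care is the rigorous justification that the mutual-information penalty terms vanish, which rests on the precise Markov and independence structure induced by the block-structured precoding rather than on a bare rank count.
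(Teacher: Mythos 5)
Your proposal follows essentially the same route as the paper's proof: the same pair of Vandermonde precoders $\Vm_1,\Vm_2$ placed in the null spaces of $\Tc(\hv_2),\Tc(\hv_1)$ respectively, $\Vm_0$ chosen in the orthogonal complement of $[\Vm_1\,\Vm_2]$, the same auxiliary-variable assignment $U=\Vm_0\uv_0$, $V_k=U+\Vm_k\uv_k$ making the penalty terms $I(V_1;Y_2,V_2|U)=I(V_2;Y_1,V_1|U)=0$, the same rank bookkeeping via the symmetric extension of Lemma \ref{lemma:rank}, and the same two trivial outer bounds (per-message MIMO wiretap s.d.o.f.\ and per-receiver stream count) for the converse. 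The "delicate point" you flag is handled in the paper exactly as you anticipate, by invoking the linear independence of $\hv_1,\hv_2$ and the fact that $[\Vm_0\,\Vm_1\,\Vm_2]$ spans the full $N+L$-dimensional space.
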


\begin{proof}
    Appendix \ref{appendix:SDOF2}.
\end{proof}

\begin{figure}[t]
    \centering
    \includegraphics[width=0.4\columnwidth]{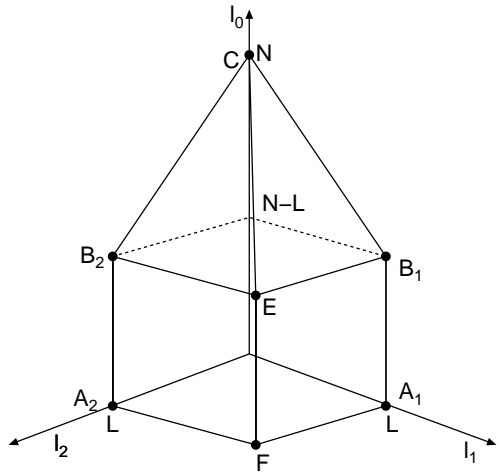}\\
    \vspace{-1em}
    \caption{s.d.o.f. region $(l_0,l_1,l_2)$ over $N+L$ dimensions of $K=2$-user frequency-selective BCC.}
    \label{fig:inner2userDoF}
\end{figure}
Fig. \ref{fig:inner2userDoF} represents the s.d.o.f. region $(l_0,l_1,l_2)$ over $N+L$ dimensions of the two-user frequency-selective BCC. The per-receiver s.d.o.f. constraints (\ref{peruserSDoF}) yield the subspace $A_1,B_1,E, F$ for user 1 and the subspace $A_2,B_2,E, F$ for user 2. The constraints (\ref{MIMOconst}) in terms of the total number of streams per receiver yield the subregion $C,B_1,E$ for user 1 and the subregion $C,B_2,E$ for user 2. For the special case of one confidential message and one common message, the region reduces to Fig. \ref{fig:2userDoF}.

\begin{remark}
Comparing Theorems \ref{theorem:DoFouterBCC},\ref{theorem:DoFinnerBC} as well as Figs. \ref{fig:3userDoF},\ref{fig:inner2userDoF} for $K=2$, it clearly appears that the s.d.o.f. of $K+1$-user BCC with $K$ confidential messages is dominated by the s.d.o.f. of $K$-user BCC with $K$ confidential messages. In other words, the s.d.o.f. region critically depends on the assumption on the eavesdropper(s) to whom each confidential message must be kept secret.
\end{remark}

\begin{figure}[t]
    \centering
    \includegraphics[width=0.4\columnwidth]{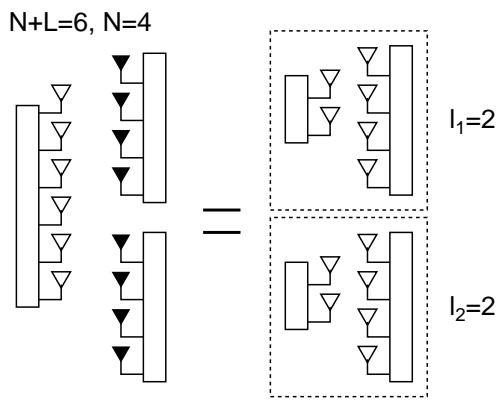}\\
    \vspace{-1em}
    \caption{Equivalent MIMO interpretation for the two-user frequency-selective BCC with two confidential messages.}
    \label{fig:Interprete2}
\end{figure}
\begin{remark}
When only two confidential messages are transmitted in the two-user frequency-selective BCC, the set of the s.d.o.f. has the equivalent MIMO interpretation \cite{lee2007hsa}. More specifically, the frequency-selective BCC (\ref{BCCwKusers}) is equivalent to the MIMO-BCC where the transmitter with $N+L$ dimensions (antennas) sends two confidential messages to two receivers with $N$ antennas. The secrecy constraint consumes $N$ dimensions for each MIMO link and lets the number of effective transmit antennas be $L$ for each user. The resulting channel is a two parallel $L\times N$ point-to-point MIMO channel without eavesdropper. Notice that the same parallel MIMO links can be obtained by applying the block diagonalization on the MIMO-BC without secrecy constraint \cite{lee2007hsa}. In other words, the secrecy constraint in the BCC with inner eavesdroppers is equivalent to the orthogonal constraint in the classical MIMO-BC.
Fig. \ref{fig:Interprete2} shows the example with $N=4,L=2$ and $K=2$ confidential messages.
\end{remark}

\section{Numerical Examples}\label{sec:numerical}
\begin{figure}[n]
    \centering
        \includegraphics[width=0.7\columnwidth]{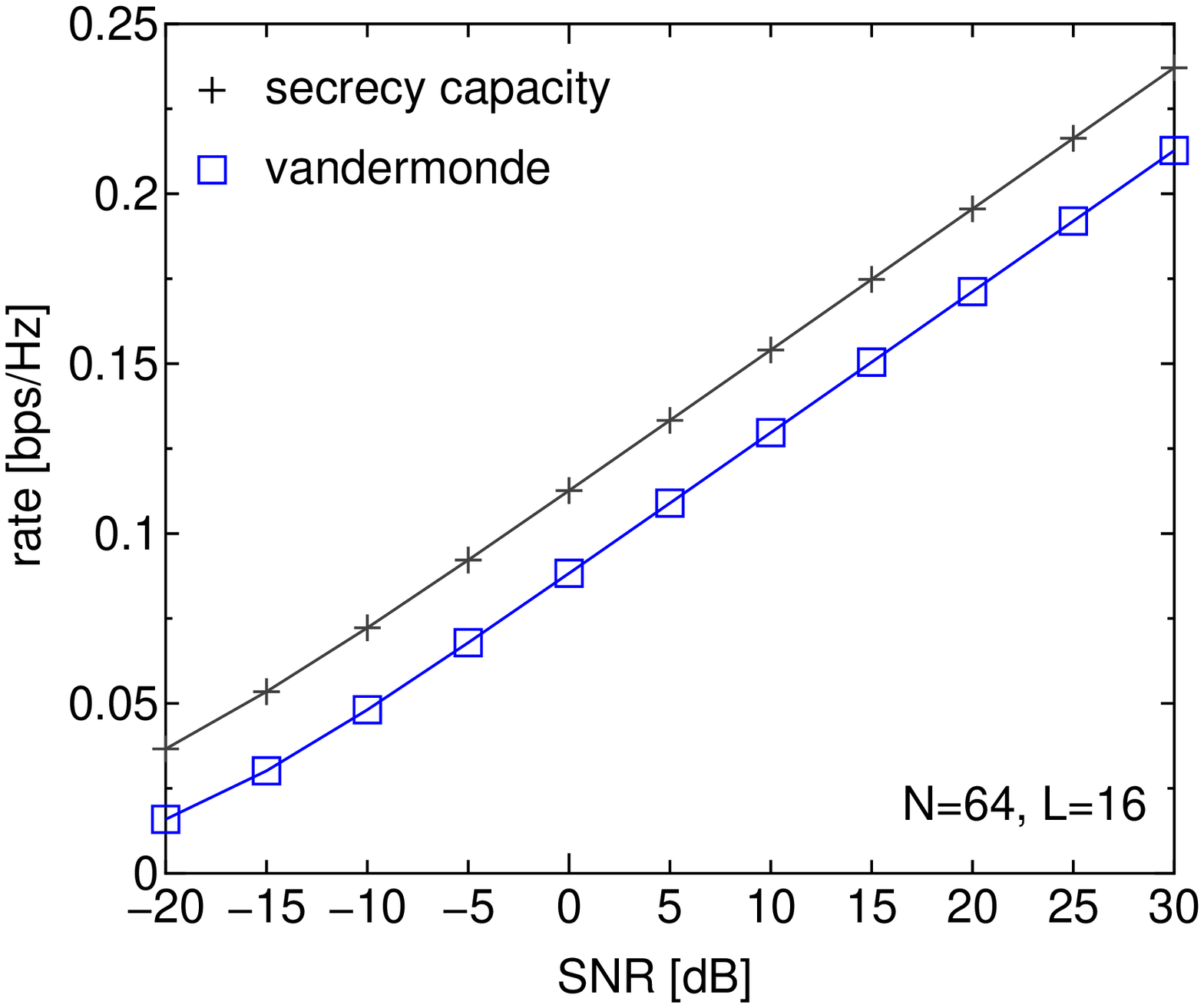}\\
        \vspace{-1.5em}
    \caption{Achievable secrecy rate with one observation at receiver 1 and $N=64, L=16$ (MISO wiretap channel).}
    \label{fig:CapacityMISO}
    \vspace{0.5em}
\end{figure}
\begin{figure}[n]
    \centering
    \includegraphics[width=0.7\columnwidth]{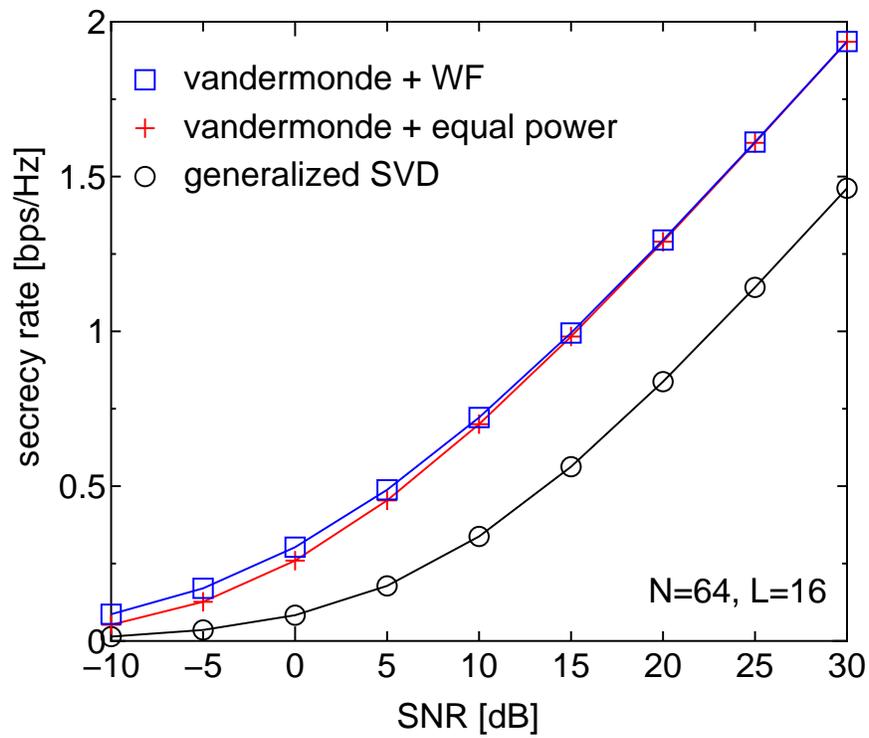}\\
    \vspace{-1.5em}
    \caption{Achievable secrecy rate with $N=64, L=16$ (MIMO wiretap channel).}
    \label{fig:N64L16}
    \vspace{0.5em}
\end{figure}
\begin{figure}[n]
    \centering
    \includegraphics[width=0.7\columnwidth]{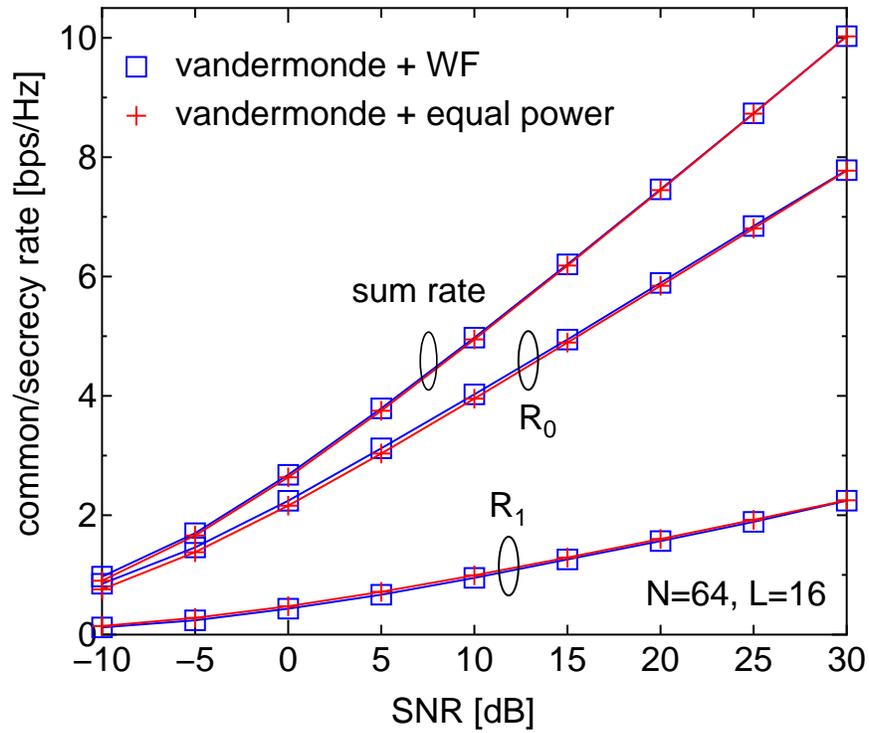}\\
    \vspace{-1.5em}
    \caption{Achievable secrecy/common rates $N=64, L=16$ in the frequency-selective BCC.}
    \label{fig:CommonSecrecyRate}
    \vspace{0.5em}
\end{figure}

\begin{figure}[n]
    \centering
    \includegraphics[width=0.7\columnwidth]{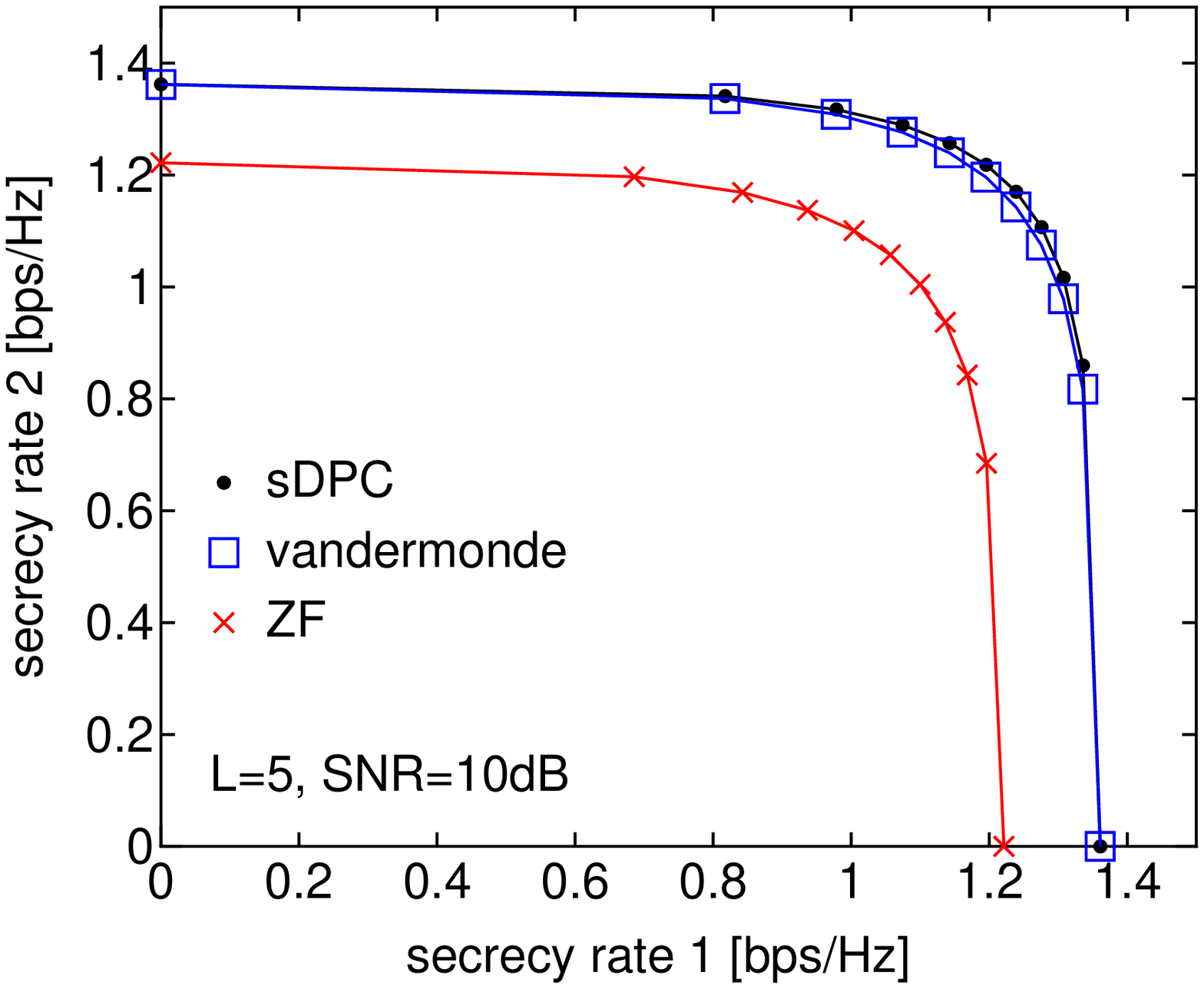}\\
    \vspace{-1.5em}
    \caption{Achievable secrecy rate region $N=1, L=5$ (MISO-BCC).}
    \label{fig:BCCregion}
    \vspace{0.5em}
\end{figure}
In order to examine the performance of the proposed Vandermonde precoding, this section provides some numerical results in different settings.

\subsection{Secrecy rate vs. SNR}
We evaluate the achievable secrecy rate $R_1^{\rm vdm}$ in (\ref{secrecyrate}) when the transmitter sends only a confidential message to receiver 1 (without a common message) in the presence of receiver 2 (eavesdropper) over the frequency-selective BCC studied in Section \ref{sec:vandermonde}.

\paragraph{MISO wiretap channel}
For the sake of comparison (albeit unrealistic), we consider the special case of the frequency-selective wiretap channel when receiver 1 has a scalar observation and the eavesdropper has $N$ observations. This is equivalent to the MISO wiretap channel with the receiver 1 channel $\hv\in\CC^{1\times (N+L)}$ and the eavesdropper channel $\Tc(\gv)\in\CC^{N\times (N+L)}$.
Without loss of generality, we assume that the observation at receiver 1 is the first row of $\Tc(\hv)$. We consider that all entries of $\hv, \gv$ are i.i.d. $\sim\Nc_{\Cc}(0, \frac{1}{L+1})$ and average the secrecy rate over a large number
of randomly generated channels with $N=64, L=16$. In Fig.
\ref{fig:CapacityMISO}, we compare the optimal beamforming strategy \cite{khisti2007stm,khisti:gmw,shafiee2007arg} and the Vandermonde precoding as a function of SNR $P$. Since only one stream is sent to receiver 1, the s.d.o.f. is $\frac{1}{N+L}$. In fact, the MISO secrecy
capacity in the high SNR regime is given by
\begin{equation}\label{Cmiso}
\frac{1}{N+L} \log\left(1+(N+L)P \max_{\phiv:
\Tc(\gv)\phiv=\zerov}|\hv \phiv |^2\right)
\end{equation}
where $\phiv\in\CC^{(N+L)\times 1}$ is the beamforming vector.
The Vandermonde precoding achieves
\begin{equation}\label{Cv-miso}
\frac{1}{N+L} \log\left(1+ (N+L)P \max_{i=1,\dots,L}|\hv \vv_{1,i} |^2\right).
\end{equation}
where $\vv_{1,i}$ denotes the $i$-th column of $\Vm_1\in\CC^{(N+L)\times L}$ orthogonal to $\Tc(\gv)$. Clearly, there exists a constant gap between (\ref{Cmiso}) and
(\ref{Cv-miso}) due to the suboptimal choice of the beamforming vector.

\paragraph{MIMO wiretap channel}
We consider the frequency-selective wiretap channel with $N=64, L=16$.
Although there exists a closed-form expression under a power-covariance constraint \cite{immse09}, the secrecy capacity under a total power constraint in (\ref{MIMOMEcapacity}) is still difficult to compute (especially for a large dimension of $N$ and $L$) because it requires a search over all possible power covariances constraints.
Therefore, in Fig. \ref{fig:N64L16}, we compare the averaged secrecy rate achieved by the generalized SVD scheme \cite{khisti2007mc} and the Vandermonde precoding.
We assume that all entries of $\hv, \gv$ are i.i.d. $\sim\Nc_{\Cc}(0, \frac{1}{L+1})$.
For the Vandermonde precoding, we show the achievable rate with waterfilling power allocation (\ref{secrecyrate}) and equal power allocation (\ref{EqualPowerSecRate}) by allocating $p=\frac{(N+L)P}{L}$ to $L$ streams. As observed, these two suboptimal schemes achieve the same s.d.o.f. of $\frac{L}{N+L}=\frac{1}{5}$ although the generalized SVD incurs a substantial power loss. The result agrees well with Theorem \ref{theorem:DoF-2user}. We remark also that the optimal waterfilling power allocation yields a negligible gain.

\subsection{The maximum sum rate point $(R_0,R_1)$ vs. SNR}
We consider the frequency-selective BCC with one confidential message to receiver 1 and one common message to two receivers. In particular, we characterize the maximum sum rate-tuple corresponding to $\gamma_0=\gamma_1$ on the boundary of the achievable rate region $\Rc_s$.
Fig. \ref{fig:CommonSecrecyRate} shows the averaged maximum sum rate-tuple $(R_0,R_1)$ of the Vandermonde precoding both with optimal input covariance computed by the greedy algorithm and with equal power allocation. We remark that there is essentially no loss with the equal power allocation.

\subsection{Two-user secrecy rate region in the frequency-selective BCC}
We consider the two-user frequency-selective BCC where the transmitter sends two confidential messages (no common message) of Section \ref{subsec:BCCinner}.
For the sake of comparison (albeit unrealistic), we consider the special case of one observation $N=1$ at each receiver.
Notice that the two-user frequency-selective BCC is equivalent to the two-user MISO BCC with $\hv_1,\hv_2\in\CC^{1\times (L+1)}$ whose secrecy capacity region is achieved by the S-DPC scheme \cite{LiuPoorIT09}. The proposed Vandermonde precoding achieves
the secrecy rate region given by all possible rate-tuples $(R_1,R_2)$
\begin{eqnarray*}
&& R_1 \leq \frac{1}{L+1} \log(1 + p_1\max_{i=1,\dots,L} |\hv_1\vv_{1,i} |^2 ) \\
&& R_2 \leq \frac{1}{L+1} \log(1 + p_2\max_{i=1,\dots,L} |\hv_2\vv_{2,i} |^2  )
\end{eqnarray*}
satisfying $p_1+p_2=(L+1)P$ where $\vv_{1,i},\vv_{2,i}$ denotes the $i$-th column of $\Vm_1\in\CC^{(N+L)\times L}$ orthogonal to $\hv_2$, $\Vm_2\in\CC^{(N+L)\times L}$ orthogonal to $\hv_1$, respectively. Fig. \ref{fig:BCCregion}
compares the averaged secrecy rate region of the Vandermonde precoding, zero-forcing beamforming, and the optimal S-DPC scheme for $L=5$ where
all entries of $\hv_1, \hv_2$ are i.i.d. $\sim\Nc_{\Cc}(0, \frac{1}{L+1})$. As observed, the Vandermonde precoding achieves the near-optimal rate region. As the number of paths $L$ increases, the gap with respect to the S-DPC becomes smaller since the Vandermonde precoding tends to choose the optimal beamformer matched to the channels.

\section{Conclusions}\label{sec:conclusion}

We considered the secured communication over the frequency-selective channel by focusing on the frequency-selective BCC. In the case of a block transmission of $N$ symbols followed by a guard interval of $L$ symbols discarded at both receivers, the frequency-selective channel can be modeled as a $N\times (N+L)$ Toeplitz matrix. For this special type of MIMO channels, we proposed a practical yet order-optimal Vandermonde precoding which enables to send $l\leq L$ streams of the confidential messages and $N-l$ streams of the common messages simultaneously over a block of $N+L$ dimensions. The key idea here consists of exploiting the frequency dimension to ``hide" confidential information in the zeros of the channel seen by the unintended receiver similarly to the spatial beamforming.
We also provided some application of the Vandermonde precoding in the multiuser secured communication scenarios and proved the optimality of the proposed scheme in terms of the achievable s.d.o.f. region.

We conclude this paper by noticing that there exists a simple approach to establish secured communications. More specifically, perfect secrecy can be built in two separated blocks ; 1) a precoding that cancels the channel seen by the eavesdropper to fulfill the equivocation requirement,  2) the powerful off-the-shelf encoding techniques to achieve the secrecy rate. Since the practical implementation of secrecy encoding techniques (double binning) remains a formidable challenge, such design is of great interest for the  future secrecy systems.

\section*{Acknowledgment}
The work is supported by the
European Commission in the framework of the FP7 Network of
Excellence in Wireless COMmunications NEWCOM++. The work of M\'erouane Debbah is supported by Alcatel-Lucent within the Alcatel-Lucent Chair on Flexible Radio at Supelec. The authors wish to thank Yingbin Liang for helpful discussions, and the anonymous reviewers for
constructive comments.

\appendices
\section{Proof of Lemma \ref{lemma:rank}}\label{appendix:rank-bcc1}
In this appendix, we consider the rank of $\Tc(\hv) \Vm_1$ where $\Vm_1$ satisfies the orthogonality $\Tc(\gv)\Vm_1=\zerov$. By letting $\vv_{1,i}$ denote the $i$-th column of $\Vm_1$ we have $\Vm_1=[\vv_{1,1},\dots,\vv_{1,L}]$ for the case of $l=L$.
We define the matrix $\Gm$ orthogonal to $\Vm_1$ by appending
$L-l$ rows $\vv_{1,l+1}^H,\dots,\vv_{1,L}^H$ to $\Tc(\gv)$
\[ \Gm = \left[
           \begin{array}{c}
             \Tc(\gv)  \\
              \vv_{1,l+1}^H\\
              \vdots\\
              \vv_{1,L}^H
           \end{array}
         \right]
\]
Notice that all $N+L-l$ rows are linearly independent.
By definition of $\Vm_1$, it is not difficult to see that $\Gm$ and $\Vm_1^H$ form a complete set of basis for a $N+L$-dimensional linear space. Indeed for $l=L$ the matrix $\Gm$ reduces to $\Tc(\gv)$, while $l<L$, a subset of a projection matrix onto the null space of $\Tc(\gv)$ are appended to $\Tc(\gv)$. Hence $\Tc(\hv)$ can be expressed as
\begin{equation}
    \Tc(\hv) = \Hm_{G} + \Hm_{V} = \Am \Gm + \Bm \Vm_1^H
\end{equation}
where $\Hm_G$ is the projection of $\Tc(\hv)$ onto the row vectors of $\Gm$ with a $N\times (N+L-l)$ coefficient matrix $\Am$, $\Hm_V$ is the projection of $\Tc(\hv)$ onto the row vectors of $\Vm_1$ with a $N\times l$ coefficient matrix $\Bm$.
\begin{eqnarray*}
\rank(\Tc(\hv) \Vm_1) &=& \rank((\Am \Gm + \Bm \Vm_1^H)\Vm_1)\\
&\overset{\mathrm{(a)}}=& \rank(\Bm)\\
&\overset{\mathrm{(b)}}=& \rank(\Bm \Vm_1^H \Vm_1\Bm^H) =\rank(\Hm_V) \\
&\overset{\mathrm{(c)}}=& l
\end{eqnarray*}
where (a) follows from the orthogonality $\Gm\Vm_1=\zerov$ and $\Vm_1^H\Vm_1=\Id_l$, (b) follows from $\rank(\Bm\Bm^H)=\rank(\Bm)$. The equality (c) is obtained as follows. We notice
\begin{eqnarray}\label{rankhg}
    \rank\left[
           \begin{array}{c}
             \Hm_V  \\
             \Gm
           \end{array}
         \right] \overset{\mathrm{(d)}}= \rank\left[
           \begin{array}{c}
             \Hm_V  +\Hm_G\\
             \Gm
           \end{array}
         \right]= \rank\left[
           \begin{array}{c}
             \Tc(\hv)\\
             \Gm
           \end{array}
         \right] \overset{\mathrm{(e)}}= \min(N+L,2N+L-l)= N+L
\end{eqnarray}
where in (d) adding $\Hm_G$ does not change the rank, (e) follows because any set of $N+L$ rows taken from $\Tc(\hv), \Gm$ are linearly independent (from the assumption that $\hv$, $\gv$ are linearly independent). Since $\Hm_V$ is orthogonal to $\Gm$, (\ref{rankhg}) yields
\[\rank(\Hm_V) = N+L-(N+L-l) = l\]
which establishes (c).
\section{Proof of Theorem \ref{theorem:covariance}} \label{appendix:covariance}
We consider the following three cases given in Lemma \ref{maxmin}.

\textbf{Case 1} Supposing $R_{01}<R_{02}$, we consider the objective function $f_1$ in (\ref{f1}). The objective
is concave only when $ \gamma_1\geq \gamma_0 $. Nevertheless, we consider the KKT conditions which are necessary for the optimality. It can be easily shown that
the KKT conditions are given by (\ref{case1-1}) and (\ref{case1-2})
where $\Psim_i \succeq \zerov$ is the Lagrangian dual matrix associated to the positive semidefiniteness constraint of $\Sm_i$ for $i=0,1$ and $\mu\geq 0$ is the Lagrangian dual variable associated to the total power constraint. It clearly appears that
for $\gamma_1\geq \gamma_0$ the objective is concave in $\Sm_0,\Sm_1$ and the problem at hand is convex. In this case, any convex optimization algorithm, the gradient-based algorithm \cite{viswanathan2003dce} for example, can be applied to find the optimal solution
while the algorithm converges to a local optimal solution for $\gamma_1< \gamma_0$.

\textbf{Case 2 :} Supposing $R_{02}<R_{01}$, we consider the objective function $f_2$ in (\ref{f2}).
Since the problem is convex ($f_2$ is concave and the constraint is linear in $\Sm_0,\Sm_1$), the KKT conditions are necessary and sufficient for optimality. We form the Lagrangian and obtain the following KKT conditions
\begin{eqnarray*}
&&\gamma_0  \Gm_0^H \left(\Id_N +  \Gm_0 \Sm_0\Gm_0^H\right)^{-1}\Gm_0 + \Psim_0  =  \mu\Id_{N+L-l}\\
&& \gamma_1 \Hm_1^H\left(\Id_N +\Hm_1\Sm_1\Hm_1^H\right)^{-1}\Hm_1+ \Psim_0 = \mu\Id_{l}\\
 && \trace(\Sm_0) + \trace(\Sm_1) = \overline{P} \\
&& \trace(\Psim_i \Sm_i) = 0, \;\;\; i=0,1
\end{eqnarray*}
where $\Psim_i \succeq \zerov$ is the Lagrangian dual matrix associated to the positive semidefiniteness constraint of $\Sm_i$ for $i=0,1$ and $\mu\geq 0$ is the Lagrangian dual variable associated to the total power constraint.
By creating $N$ parallel channels via SVD on $\Gm_0,\Hm_1$ in (\ref{SVD}), we readily obtain the solution (\ref{solution2}).

\textbf{Case 3} For $0< \theta<1$, we consider the objective function $f_3$ in (\ref{f3}).
In the following we focus on $\gamma_0>0$. Notice that if $\gamma_0=0$ we have $R_{01}=R_{02}=0$ which yields the corner point $(0,R_1^{\rm vdm})$ where $R_1^{\rm vdm}$ denotes the secrecy rate characterized in (\ref{secrecyrate}). The KKT conditions, necessary for the optimality,  are given by (\ref{case3-1}) and (\ref{case3-2})
where 
$\Psim_i \succeq \zerov$ is the Lagrangian dual matrix associated to the positive semidefiniteness  constraints for $i=0,1$ and $\mu\geq 0$ is the Lagrangian dual variable associated to the total power constraint.
The gradient-based algorithm \cite{viswanathan2003dce} can be applied to find the solution satisfying these KKT conditions.
Although this algorithm yields the optimal and unique solution
for $\gamma_1\geq \gamma_0\theta$, the algorithm converges to a local optimal solution for $\gamma_1<\gamma_0\theta$.
\section{Proof of Lemma \ref{lemma:region}}\label{appendix:achievability}
This section provides the achievability proof of the region (\ref{RegionKuserBCC}).
We extensively use the notation $A_{\epsilon}^{(n)} (P_{X,Y})$ to denote a set of jointly typical sequences $\xv,\yv$ of length $n$ with respect to the distribution $P(x,y)$. We let $\epsilon>0$ which is small for a large $n$.

\paragraph{Codebook generation}
Fix $P(u), P(v_1|u), \dots, P(v_K|u)$ and $P(x|u,v_1,\dots,v_K)$.
The stochastic encoder randomly generates
\begin{itemize}
  \item a typical sequence $\uv(w_0)$ according $P(\uv^n)=\prod_{i=1}^n P(u_i)$ where $w_0\in \{1,\dots,2^{n(R_0-\epsilon)}\}$
  \item for each $\uv^n$ and $k=1,\dots, K$, $2^{n(I(V_k;Y_k|U)-\epsilon)}$ i.i.d. codewords $\vv_k(w_k, j_k)$ each with $P(\vv_k^n|\uv^n)=\prod_{i=1}^n P(v_k^n|u_i)$, where the indices are given by
  \[w_k\in \{1,\dots,2^{n(R_1-\epsilon)}\}, j_k\in\{1, \dots, 2^{n(I(V_k;Z|U)-\epsilon) }\}. \]
\end{itemize}
Next, distribute the $2^{n(I(V_k;Y_k|U)-\epsilon)}$ codewords into $2^{nR_k}$ bins such that each bin
contains $2^{n (I(V_k;Z|U)-\epsilon)}$ codewords.

\paragraph{Encoding}
To send the messages $w_0,w_1,\dots,w_k$, we first choose randomly the index $w_0$ and the corresponding codeword $\uv(w_0)$.
Given the common message $\uv(w_0)$ and randomly chosen $K$ bins $w_1,\dots,w_K$, the encoder selects a set of indices
$j_1,\dots,j_K$ such that
\begin{equation}\label{TypicalSeq1}
\left(\uv(w_0), \vv_1(w_1, j_1), \dots, \vv_K(w_K, j_K) \right) \in A_{\epsilon}^{(n)} (P_{U, V_1,\dots,V_K})
\end{equation}
If there are more than one such sequence, it randomly selects one. Finally
the encoder selects $\xv$ according to $P(x|v_1,\dots,v_K)$.

\paragraph{Decoding} The received signals at the $K$ legitimate receivers are $\yv_1^n,\dots,\yv_K^n$, the outputs of
the channels $P(\yv_k^n |\xv^n) = \prod_{i=1}^n   P(y_k^n |x^n)$ for any $k$.
Receiver $k$ first chooses $w_{0}^{(k)}$ such that
\begin{equation}
(\uv(w_{0}^{(k)}),\yv_k)\in A_{\epsilon}^{(n)} (P_{U,Y_k})
\end{equation}
if such a $w_{0}^{(k)}$ exists. Then, for a given $w_{0}^{(k)}$ it chooses $w_k$ so that
\begin{equation}
(\uv(w_0^k), \vv_k(w_k,j_k),\yv_k)\in A_{\epsilon}^{(n)} (P_{U, V_k,Y_k})
\end{equation}
if such $\vv_k$ exists.

\paragraph{Error probability analysis} Without loss of generality, we assume that the message set is $w_0=w_1=\dots=w_k=1$. We remark that an error is declared if one or more of the following events occur.
\begin{itemize}
  \item  Encoding fails
      \begin{equation}
    E_1 \eqdef \{ \left(\uv(1), \vv_1(1, j_1), \dots, \vv_K(1, j_K) \right) \notin A_{\epsilon}^{(n)} (P_{U, V_1,\dots,V_K}) \}.
\end{equation}
From \cite{elgamal1981pms} we have $P(E_1)\leq \epsilon$, if
 \begin{equation}
 \sum_{k=1}^K R_k \leq  \sum_{k=1}^K I(V_k;Y_k|U) - \sum_{j=2}^K I(V_{\pi(j)}; V_{\pi(1)},\dots,V_{\pi(j-1)}|U)-I(V_1,\dots,V_K;Z|U)
\end{equation}
  \item Decoding step 1 fails ; there does not exist a jointly typical sequence for some $k$, i.e.
        \begin{equation}
    E_{2}^k \eqdef \{ \left(\uv(1), \vv_k(1, j_1), \yv_k \right) \notin A_{\epsilon}^{(n)} (P_{U, V_k,Y_k}) \}.
\end{equation}
From joint typicality \cite{cover_book} we have $P(E_{2}^k)\leq \epsilon$ for any $k$.
  \item Decoding step 2 fails ; there exits other sequences  satisfying the joint typicality for some $k$
      \begin{equation}
    E_{3}^k \eqdef \{ \forall(w_0,w_k)\neq (1,1),  \left(\uv(w_0), \vv_k(w_k, j_1), \yv_k \right) \in A_{\epsilon}^{(n)} (P_{U, V_k,Y_k}) \}.
\end{equation}
From \cite{elgamal1981pms} we have
$P(E_{3}^k)\leq \epsilon$ if $R_k\leq I(V_k;Y_k|U) - I(V_k;Z|U) -\epsilon$ for any $k$.
\end{itemize}
Hence, the error probability $P_e^{(n)}=P(E_1 \bigcup  (\cup_k E_{2k}) \bigcup (\cup_k E_{3k}) )\leq \epsilon $ if the rate-tuple satisfies (\ref{Region}).

\paragraph{Equivocation calculation}
To prove the equivocation requirement
\begin{eqnarray}
\sum_{k\in \Kc} R_k -\frac{1}{n} H(W_{\Kc}|Z^n) \leq \frac{\epsilon}{n}, \;\; \Kc\subseteq \{1,\dots,K\}
\end{eqnarray}
we consider $H(W_{\Kc}|Z^n) $ where we denote $W_{\Kc}=\{W_k,k\in \Kc\}$. Following the foot steps as the proof of \cite[Theorem 1]{bagherikaram2008srr}, we have
\begin{eqnarray*}
    H(W_{\Kc}|Z^n)&\overset{\mathrm{(a)}}\geq & H(W_{\Kc}|Z^n, U^n)\\
    &= & H(W_{\Kc},Z^n|U^n) - H(Z^n|U^n) \\
    &=& H(W_{\Kc},V_{\Kc}^n,Z^n|U^n)-H(V_{\Kc}^n|W_{\Kc},Z^n, U^n) - H(Z^n|U^n)\\
    &=& H(W_{\Kc},V_{\Kc}^n|U^n)+H(Z^n| W_{\Kc},V_{\Kc}^n,U^n)-H(V_{\Kc}^n|W_{\Kc},Z^n,U^n) - H(Z^n|U^n)\\
    &\overset{\mathrm{(b)}}\geq &H(W_{\Kc},V_{\Kc}^n|U^n)+H(Z^n| W_{\Kc},V_{\Kc}^n,U^n)-n\epsilon' - H(Z^n) \\
    &\overset{\mathrm{(c)}}= &H(W_{\Kc},V_{\Kc}^n|U^n)+H(Z^n| V_{\Kc}^n,U^n)-n\epsilon' - H(Z^n|U^n)\\
    &\overset{\mathrm{(d)}}\geq & H(V_{\Kc}^n|U^n) +H(Z^n| V_{\Kc}^n,U^n)-n\epsilon' - H(Z^n|U^n) \\
    &= & H(V_{\Kc}^n|U^n) -I(V_{\Kc}^n; Z^n|U^n)-n\epsilon' \\
   &\overset{\mathrm{(e)}}= & \sum_{k\in\Kc}H(V_k^n|U^n)-\sum_{j=2}^{|\Kc|} I(V_{\pi(j)}^n; V_{\pi(1)}^n,\dots,V_{\pi(j-1)}^n|U^n) -I(V_{\Kc}^n; Z^n|U^n)-n\epsilon'\\
    &\overset{\mathrm{(f)}}\geq & \sum_{k\in\Kc}I(V_k^n;Y_k^n|U^n)-\sum_{j=2}^{|\Kc|} I(V_{\pi(j)}^n; V_{\pi(1)}^n,\dots,V_{\pi(j-1)}^n|U^n) -I(V_{\Kc}^n; Z^n|U^n)-n\epsilon'\\
    &\geq & n \sum_{k\in\Kc} R_k -n\epsilon'
\end{eqnarray*}
where (a) follows because the conditioning decrease the entropy,
(b) follows from Fano's inequality \cite{cover_book} stating that for a sufficiently large $n$ we have
\[ H(V_{\Kc}^n|W_{\Kc},Z^n,U^n) \leq 1 + n P_{e,\rm eav}^{(n)} \sum_{k\in \Kc}I(V_k;Z|U) \leq n\epsilon'\]
where $P_{e,\rm eav}^{(n)}$ denotes the eavesdropper's error probability when decoding $V_{\Kc}^n$ with side information on $W_{\Kc}$ at the rate $\sum_{k\in \Kc}I(V_k;Z|U)$. It can be easily shown that $P_{e,\rm eav}^{(n)}\rightarrow 0$ as $n\rightarrow\infty$.
 (c) follows from the Markov chain $W_{\Kc}\rightarrow V_{\Kc}^n \rightarrow Z^n$, (d) follows
 by ignoring a non-negative term $ H(W_{\Kc}|V_{\Kc}^n)$, (e) follows because $H(V_{\Kc}^n|U^n)=\sum_{k\in\Kc}H(V_k^n|U^n)-\sum_{j=2}^{|\Kc|} I(V_{\pi(j)}^n; V_{\pi(1)}^n,\dots,V_{\pi(j-1)}^n|U^n)$ for any permutation $\pi$ over the subset $\Kc$, (f) follows because $H(V_k^n|U^n)\geq I(V_k^n;Y_k^n|U^n)$ for any $k$.

\section{Proof of Theorem \ref{theorem:DoFouterBCC}} \label{appendix:SDOF1}
The achievability follows by extending Theorem \ref{theorem:DoF-2user} to the case of $K$ confidential messages. First we remark that as a straightforward extension of Lemma \ref{lemma:rank} the following lemma holds.
\begin{lemma} \label{lemma:rank2}
For $\sum_{k=1}^K l_k \leq L$, there exists a matrix $[\Vm_1,\dots,\Vm_K]$ with $\sum_{k=1}^K l_k$ orthonormal columns with size $N+L$ satisfying
\begin{eqnarray} \label{KorthProperty}
&&\Tc(\gv) \Vm_k = \zerov_{N\times l_k}, \;\;\; k=1,\dots,K \\ \label{rankh_kV_j}
&&\rank\left(\Tc(\hv_k) (\sum_{j\in \Kc}\Vm_j \Vm_j^H)\Tc(\hv_k)^H \right)= \sum_{j\in \Kc} l_j,  \;\; \; \forall \Kc\subseteq\{1,\dots,K\}
\end{eqnarray}
where $l_k$ denotes the number of columns of $\Vm_k$
\end{lemma}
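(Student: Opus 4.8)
The plan is to reuse the argument of Lemma~\ref{lemma:rank} (Appendix~\ref{appendix:rank-bcc1}) receiver by receiver. First I would construct the blocks. Since $\Tc(\gv)$ is a full-rank $N\times(N+L)$ Toeplitz matrix, its null space has dimension exactly $L$. I would fix an orthonormal basis $\{\vv_1,\dots,\vv_L\}$ of $\ker\Tc(\gv)$ (the unitary Vandermonde vectors of Definition~\ref{def-uniVand}) and, using $\sum_{k=1}^K l_k\le L$, assign the first $\sum_{k=1}^K l_k$ of them to the blocks, letting $\Vm_k$ collect $l_k$ of these vectors. Then $[\Vm_1,\dots,\Vm_K]$ has $\sum_{k=1}^K l_k$ orthonormal columns and each block satisfies $\Tc(\gv)\Vm_k=\zerov_{N\times l_k}$, establishing (\ref{KorthProperty}).

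Next I would reduce (\ref{rankh_kV_j}) to a full-column-rank statement. Fixing $\Kc$ and writing $\Vm_{\Kc}=[\Vm_j]_{j\in\Kc}$, mutual orthonormality of the blocks gives $\sum_{j\in\Kc}\Vm_j\Vm_j^H=\Vm_{\Kc}\Vm_{\Kc}^H$, the orthogonal projector onto a space of dimension $\sum_{j\in\Kc}l_j$. Hence
\begin{equation*}
\rank\left(\Tc(\hv_k)\Vm_{\Kc}\Vm_{\Kc}^H\Tc(\hv_k)^H\right)=\rank\left(\Tc(\hv_k)\Vm_{\Kc}\right),
\end{equation*}
and it remains to prove that $\Tc(\hv_k)\Vm_{\Kc}$ has full column rank $\sum_{j\in\Kc}l_j$ for every receiver $k$ and every subset $\Kc$.

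This is precisely Lemma~\ref{lemma:rank} with $(\hv,\gv)$ replaced by $(\hv_k,\gv)$ and $\Vm_1$ replaced by $\Vm_{\Kc}$. Following Appendix~\ref{appendix:rank-bcc1}, I would append to $\Tc(\gv)$ the $L-\sum_{j\in\Kc}l_j$ leftover null-space vectors (the blocks $\Vm_j$ with $j\notin\Kc$ together with $\vv_{\sum_k l_k+1},\dots,\vv_L$) to form a matrix $\Gm_{\Kc}$ with $N+L-\sum_{j\in\Kc}l_j$ linearly independent rows, all orthogonal to $\Vm_{\Kc}$. Decomposing $\Tc(\hv_k)=\Am_k\Gm_{\Kc}+\Bm_k\Vm_{\Kc}^H$ and using $\Gm_{\Kc}\Vm_{\Kc}=\zerov$ with $\Vm_{\Kc}^H\Vm_{\Kc}=\Id$ yields $\Tc(\hv_k)\Vm_{\Kc}=\Bm_k$, so $\rank(\Tc(\hv_k)\Vm_{\Kc})=\rank(\Bm_k)$, and the identity $\rank(\Bm_k)=\sum_{j\in\Kc}l_j$ comes from the rank count in (\ref{rankhg}).

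The one step that deserves care is that very count, namely that stacking $\Tc(\hv_k)$ on $\Gm_{\Kc}$ yields a matrix of full column rank $N+L$, equivalently that $\ker\Tc(\hv_k)\cap\ker\Tc(\gv)=\{\zerov\}$. This is exactly where the hypothesis is used: it follows from the linear independence of the pair $(\hv_k,\gv)$, which is guaranteed for every $k$ by the assumed linear independence of $\hv_1,\dots,\hv_K,\gv$. Since the projector $\Vm_{\Kc}\Vm_{\Kc}^H$ and the augmented matrix $\Gm_{\Kc}$ involve only $\gv$, each $(k,\Kc)$ pair is settled independently by the two-channel result of Lemma~\ref{lemma:rank}; I expect the only genuine work beyond that lemma to be the bookkeeping needed to realize all $2^K$ subset conditions simultaneously with a single orthonormal family $\{\Vm_k\}$.
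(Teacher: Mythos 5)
Your proposal is correct and follows essentially the same route as the paper's Appendix E: both identify the union of blocks $\Vm_{\Kc}$ with a subset of the $L$ orthonormal null-space (Vandermonde) columns of $\Tc(\gv)$, append the complementary columns to $\Tc(\gv)$ to form the augmented matrix $\Gm$, decompose $\Tc(\hv_k)=\Am\Gm+\Bm\Vm_{\Kc}^H$, and conclude via the rank count of Appendix A using the linear independence of $\hv_1,\dots,\hv_K,\gv$. Your explicit reduction of $\rank\bigl(\Tc(\hv_k)\Vm_{\Kc}\Vm_{\Kc}^H\Tc(\hv_k)^H\bigr)$ to the full-column-rank of $\Tc(\hv_k)\Vm_{\Kc}$, and your isolation of the kernel-intersection condition as the one place the hypothesis enters, merely make explicit what the paper leaves implicit.
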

A sketch of proof is given in Appendix \ref{appendix:rank-bcc2}.

We let $\Vm_0$ be unitary matrix with $N+L-\sum_{k=1}^K l_k$ orthonormal columns in the null space of $[\Vm_1,\dots,\Vm_K]$ such that $\Vm_0^H  [\Vm_1,\dots,\Vm_K]= \zerov$. In other words, the Vandermonde precoder $\Vm=[\Vm_0,\dots,\Vm_K]$ is a squared unitary matrix satisfying $\Vm\Vm^H =\Id_{N+L}$.
Based on the Vandermonde precoder $\Vm$, we construct the transmit vector $\xv$ as
\begin{equation}\label{GSC}
\xv = \sum_{k=0}^K \Vm_k\uv_k
\end{equation}
where $\uv_0,\uv_1,\dots,\uv_K$ are mutually independent Gaussian vectors with zero mean and covariance $\Sm_0,\Sm_1,\dots,\Sm_K$ satisfying $\sum_{i=0}^K \trace(\Sm_i)\leq (N+L)P$.
From the orthogonality properties (\ref{KorthProperty}), the received signals become
\begin{eqnarray*}
\yv_k &=& \Tc(\hv_k) \Vm_0\uv_0 +  \Tc(\hv_k) \Vm_k\uv_k + \Tc(\hv_k)\sum_{j\neq k} \Vm_j\uv_j +\nv_k, \;\; k=1,\dots,K \\
\zv &=& \Tc(\gv)\Vm_0\uv_0 + \nuv
\end{eqnarray*}
where receiver $k$ observes the common message, the intended confidential message, and the interference from other users, while receiver $K+1$ observes only the common message. By letting $U=\Vm_0\uv_0$, $V_k=U + \Vm_k\uv_k$ for $k=1,\dots,K$, $X = U + \sum_{k=1}^K V_k$ and considering the equal power allocation to all $N+L$ streams, we readily obtain
\begin{eqnarray}
 I(U;Y_k)&=& \frac{1}{N+L}\log\frac{|\Id_N +P\Tc(\hv_k)\left(\sum_{j=0}^K{\Vm}_j {\Vm}_j^H\right)\Tc(\hv_k)^H |}
{|\Id_N +P\Tc(\hv_k)\left(\sum_{j=1}^K{\Vm}_j{\Vm}_j^H\right)\Tc(\hv_k)^H|},\;\; \forall k\\
I(U;Z) &=&  \frac{1}{N+L}\log\left|\Id_N +  P\Tc(\gv) {\Vm}_0 {\Vm}_0^H \Tc(\gv)^H\right|\\ \label{I(vk;yk)}
I(V_k;Y_k|U) &=&  \frac{1}{N+L}\log\frac{\left|\Id_N + P\Tc(\hv_k)(\sum_{j=1}^K{\Vm}_j{\Vm}_j^H) \Tc(\hv_k)^H\right|}{\left|\Id_N + P\Tc(\hv_k)(\sum_{j=1,j\neq k}^K{\Vm}_j {\Vm}_j^H ) \Tc(\hv_k)^H\right|},\;\; \forall k \\ \label{Penalty}
I(V_{\Kc};Z|U) &=&  0, \;\;\;\; \forall \Kc\subseteq \{1,\dots,K\}
\end{eqnarray}
and we also have $H(V_{\Kc}|U)=\sum_{k\in \Kc} H(V_k|U)$ from the independency between $V_1,\dots,V_K$ conditioned on $U$. Plugging this together with (\ref{I(vk;yk)}) and (\ref{Penalty}) into (\ref{RegionKuserBCC}), we have
\begin{eqnarray*}
   && R_k \leq I(V_k;Y_k|U), \;\; k=1,\dots,K \\
   && \sum_{k\in \Kc} R_k \leq \sum_{k\in \Kc}I(V_k;Y_k|U).
\end{eqnarray*}
In order to find the d.o.f. region, we notice
\begin{eqnarray}\nonumber
\rank(\Tc(\gv) \Vm_0 \Vm_0^H\Tc(\gv)^H) &=& \rank(\Tc(\gv)\Vm_0) \\ \nonumber
&\overset{\mathrm{(a)}}=& \rank(\Tc(\gv)[\Vm_0 \Vm_1,\dots,\Vm_K]) \\ \label{common}
&\overset{\mathrm{(b)}}=& \rank(\Tc(\gv))=N\\ \nonumber
\rank\left(\Tc(\hv_k)(\sum_{j=0}^K\Vm_j\Vm_j^H)\Tc(\hv_k)^H\right) &=& \rank(\Tc(\hv_k)\Vm\Vm^H\Tc(\hv_k)^H) \\ \label{enumerator1}
&\overset{\mathrm{(b)}}=& \rank(\Tc(\hv_k))=N \\  \label{denominator1}
\rank\left(\Tc(\hv_k)(\sum_{j=1}^K\Vm_j\Vm_j^H)\Tc(\hv_k)^H\right) &\overset{\mathrm{(c)}}=& \sum_{j=1}^K l_j\\ \label{denominator2}
\rank\left(\Tc(\hv_k)(\sum_{j=1,j\neq k}^K\Vm_j\Vm_j^H)\Tc(\hv_k)^H\right) &\overset{\mathrm{(c)}}=& \sum_{j=1,j\neq k}^K l_j
\end{eqnarray}
where (a) follows from orthogonality between $\Tc(\gv)$ and $\Vm_k$ for $k\geq 1$, (b)
follows from the fact that $\Vm=[\Vm_0\dots\Vm_K]$ is unitary satisfying $
\Vm\Vm^H=\Id$, and (c) follows from Lemma \ref{lemma:rank2}. From (\ref{enumerator1}) and (\ref{denominator1}), we readily
obtain
$r_0\leq \frac{N-\sum_{k=1}^K l_k}{N+L}$, which is dominated by (\ref{common}). Combining (\ref{denominator1}) and (\ref{denominator2}), we obtain $r_k\leq \frac{l_k}{N+L}$ for $k=1,\dots,K$. This completes the achievability.

The converse follows by a natural extension of Theorem \ref{theorem:DoF-2user} to the $K+1$-user BCC. To obtain the constraint (\ref{CooperativeBound1}), we consider that the first $K$ receivers perfectly cooperate to decode the $K$ confidential messages and one common message. By treating these $K$ receivers as a \emph{virtual} receiver with $KN$ antennas, we immediately obtain the bound (\ref{CooperativeBound1}) corresponding to the s.d.o.f. of the MIMO wiretap channel with the virtual receiver channel $[\Tc(\hv_1)^T,\dots,\Tc(\hv_K)^T]^T$ and the eavesdropper channel $\Tc(\gv)$. The bound (\ref{CooperativeBound2}) is obtained by noticing that the total number of streams that receiver $k$ can decode is limited by the d.o.f. of $\Tc(\hv_k)$, i.e. $N$. Namely, we have the following $K$ inequalities
\begin{equation}\label{MIMOconstK}
    l_0 + l_k \leq N, \;\; k=1,\dots,K
\end{equation}
which yields $l_0\leq N-\max_k l_k$. Further by letting $l_k=L$ for any $k\in\{1,\dots,K\}$ and
and $l_j=0$ for any $j\neq k$, we obtain $l_0\leq N-L$. Adding the last inequality and (\ref{CooperativeBound1}), we obtain (\ref{CooperativeBound2}). This establishes the converse.
\section{Proof of Lemma \ref{lemma:rank2}}\label{appendix:rank-bcc2}
We consider $\rank(\Tc(\hv_k)\sum_{j\in \Kc} \Vm_j\Vm_j^H \Tc(\hv_k)^H)$ for a subset $\Kc\subseteq\{1,\dots,K\}$.
First we let $\vv_{c,1},\dots,\vv_{c,L}$ denote $L$ orthonormal columns that form a unitary Vandermonde matrix orthogonal to $\Tc(\gv)$. For any subset $\Lc \subseteq\{1,\dots,L\}$, we let $\Vm_{c,\Lc}$ be the unitary matrix formed by $|\Lc|$ columns corresponding to the subset $\Lc$ taken from $\vv_{c,1},\dots,\vv_{c,L}$.
Since a unitary matrix formed by $\{\Vm_k\}_{k\in\Kc}$ for any $\Kc$ can be expressed equivalently as $\Vm_{c,\Lc}$,  we consider $\rank(\Tc(\hv_k)\Vm_{c,\Lc} \Vm_{c,\Lc}^H\Tc(\hv_k)^H)$. For a given $\Lc$, we let $\Vm_{c,\overline{\Lc}}$ denote a unitary matrix composed by $L-|\Lc|$ columns corresponding to the complementary set $\overline{\Lc}$ such that
$\Lc+\overline{\Lc}=\{1,\dots,L\}$. In order to derive the rank, we follow the same approach as Appendix \ref{appendix:rank-bcc1}. We define the matrix $\Gm_{\Lc}\in\CC^{(N+L-|\Lc|)\times (N+L)}$ orthogonal to $\Vm_{c,\Lc}$ by appending
$\Vm_{c,\overline{\Lc}}^H$ to $\Tc(\gv)$
\[ \Gm_{\Lc} = \left[
           \begin{array}{c}
             \Tc(\gv)  \\
              \Vm_{c,\overline{\Lc}}^H
           \end{array}
         \right]
\]
where the $N+L-|\Lc|$ rows are linearly independent. Since $\Gm_{\Lc}$ and $\Vm_{c,\Lc}^H$ form a complete set of a $N+L$-dimensional linear space, $\Tc(\hv_k)$ can be expressed as
\begin{equation}
    \Tc(\hv_k) = \Am_{k,\Lc}\Gm_{\Lc} + \Bm_{k,\Lc} \Vm_{c,\Lc}^H, \;\; k=1,\dots,K
\end{equation}
where $\Am_{k,\Lc},\Bm_{k,\Lc}$ is a coefficient matrix with dimension $N\times (N+L-|\Lc|)$, $N\times |\Lc|$ respectively. By recalling that any set of $N+L$ rows taken from $\Tc(\hv_k),\Tc(\gv)$ are linearly independent for $k=1,\dots,K$ (from the assumption that $\gv,\hv_1,\dots,\hv_K$ are linearly independent), we can repeat the same argument as Appendix \ref{appendix:rank-bcc1} and obtain
\[\rank(\Tc(\hv_k)\Vm_{c,\Lc} \Vm_{c,\Lc}^H\Tc(\hv_k)^H) = |\Lc|, \;\; \forall \Lc \subseteq\{1,\dots,L\}, k=1,\dots,K \]
which yields the result.

\section{Proof of Theorem \ref{theorem:DoFinnerBC}} \label{appendix:SDOF2}
The achievability follows by generalizing Theorem \ref{theorem:DoF-2user} for the case of two confidential messages. We remark that by symmetry Lemma \ref{lemma:rank} for one beamforming matrix $\Vm_1$ can be trivially extended to two beamforming matrices $\Vm_1$ and $\Vm_2$. Namely, we have
\begin{lemma}
For $l_1\leq L$ and $l_2\leq L$, there exists $\Vm_k$ with $l_k$ orthnormal columns for $k=1,2$ satisfying
\begin{eqnarray}\label{OrthoNOrtho}
&& \Tc(\hv_k) \Vm_j =\zerov_{N\times l_j}, \;\; k=1,2, j\neq k \\ \label{rankhkvj}
&& \rank(\Tc(\hv_k) \Vm_k) = l_k, \;\; k=1,2
\end{eqnarray}
\end{lemma}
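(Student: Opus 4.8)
The plan is to produce the pair $(\Vm_1,\Vm_2)$ by invoking Lemma \ref{lemma:rank} twice, once for each receiver, with the roles of the legitimate and eavesdropper channels exchanged. The key observation is that the four required conditions decouple into two independent single-beamformer problems: $\Vm_1$ is constrained only by $\hv_2$ (through the nulling $\Tc(\hv_2)\Vm_1=\zerov$) and by $\hv_1$ (through the rank), while $\Vm_2$ is constrained only by $\hv_1$ (nulling) and $\hv_2$ (rank). No condition couples $\Vm_1$ and $\Vm_2$ to each other, so the two beamformers may be constructed separately. This is precisely what makes the present lemma easier than Lemma \ref{lemma:rank2}, where a single common eavesdropper $\Tc(\gv)$ forces all beamformers to share one null space.

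Concretely, I would first build $\Vm_1$ by applying Lemma \ref{lemma:rank} with $\hv_1$ playing the role of the legitimate channel $\hv$ and $\hv_2$ that of the eavesdropper channel $\gv$: form the unitary Vandermonde matrix of Definition \ref{def-uniVand} from $l_1\leq L$ of the roots of the channel polynomial of $\hv_2$. By Property \ref{def-Vand} this gives $\Tc(\hv_2)\Vm_1=\zerov_{N\times l_1}$, and since $\hv_1,\hv_2$ are linearly independent the rank argument of Appendix \ref{appendix:rank-bcc1} applies verbatim (reading $\hv_2$ for $\gv$ and $\hv_1$ for $\hv$) to yield $\rank(\Tc(\hv_1)\Vm_1)=l_1$. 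The matrix $\Vm_2$ is then obtained by the identical procedure with the indices $1$ and $2$ interchanged, giving $\Tc(\hv_1)\Vm_2=\zerov_{N\times l_2}$ and $\rank(\Tc(\hv_2)\Vm_2)=l_2$.

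The only hypothesis that Lemma \ref{lemma:rank}, and hence the argument of Appendix \ref{appendix:rank-bcc1}, requires is the linear independence of the two channel vectors involved; since this property is symmetric in $\hv_1$ and $\hv_2$, it holds in both orderings and both applications are legitimate. Consequently there is no real obstacle here: the step that might superficially seem delicate, namely guaranteeing that the separately constructed $\Vm_1$ and $\Vm_2$ do not spoil one another's rank conditions, is a non-issue because the lemma imposes no cross-constraint between them. The whole proof reduces to checking that the two invocations of the earlier lemma collectively cover all four displayed equalities.
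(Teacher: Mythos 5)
Your proposal is correct and takes essentially the same route as the paper: the paper's own justification is precisely that ``by symmetry Lemma \ref{lemma:rank} for one beamforming matrix $\Vm_1$ can be trivially extended to two beamforming matrices,'' i.e.\ two applications of Lemma \ref{lemma:rank} with the roles of $\hv_1$ and $\hv_2$ interchanged, which is legitimate since linear independence of the pair is symmetric and the four conditions impose no coupling between $\Vm_1$ and $\Vm_2$. Your write-up simply makes explicit what the paper leaves as a one-line remark.
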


Further, we let $\Vm_0$ be a unitary matrix with $M=N+L-\rank([\Vm_1 \Vm_2])$ orthonormal columns in the null space of $[\Vm_1\Vm_2]$ such that $\Vm_0^H [\Vm_1\Vm_2]=\zerov_{M\times (l_1+l_2)}$.
We construct $\xv$ by Gaussian superposition coding based on the the Vandermonde precoder $\Vm_0, \Vm_1$ and $\Vm_2$. From (\ref{OrthoNOrtho}), each user observes the vector of its confidential message and that of the common message, i.e.
\begin{eqnarray} \label{ReceivedSig}
\yv_1 &= & \Tc(\hv_1) (\Vm_0 \uv_0+ \Vm_1 \uv_1) + \nv_1 \\ \nonumber
\yv_2 &= & \Tc(\hv_2) (\Vm_0 \uv_0+ \Vm_2 \uv_2) + \nv_2
\end{eqnarray}
By letting $U = \Vm_0\uv_0, V_k=U+ \Vm_k\uv_k$ for $k=1,2$, $X=V_1+V_2$ and considering equal power allocation to all streams with $p=\frac{(N+L)P}{M+l_1+l_2}$, we readily obtain
\begin{eqnarray*}
I(U;Y_k)&=&\frac{1}{N+L} \log\frac{|\Id_N + p\Tc(\hv_k)(\Vm_0\Vm_0^H+\Vm_k\Vm_k^H)\Tc(\hv_k)^H| }{|\Id_N + p \Tc(\hv_k)\Vm_k\Vm_k^H\Tc(\hv_k)^H|} \\ \nonumber
 I(V_k;Y_k|U) &=& \frac{1}{N+L} \log|\Id_N + p\Tc(\hv_k)\Vm_k\Vm_k^H\Tc(\hv_k)^H|  \\
 I(V_1;Y_2,V_2|U) &=&I(V_2;Y_1,V_1|U)=0
\end{eqnarray*}
We remark
\begin{eqnarray*}
\rank(\Tc(\hv_k)\Vm_k\Vm_k^H\Tc(\hv_k)^H)&=&\rank(\Tc(\hv_k)\Vm_k) = l_k, \;\; k=1,2 \\
 \rank(\Tc(\hv_k)(\Vm_0\Vm_0^H+\Vm_k\Vm_k^H)\Tc(\hv_k)^H) &=&  \rank\left(\Tc(\hv_k)[\Vm_0 \Vm_k] \left[\begin{array}{c}
           \Vm_0^H \\
           \Vm_k^H
         \end{array}\right]\Tc(\hv_k)^H\right) \\
         &\overset{\mathrm{(a)}}=& \rank(\Tc(\hv_k)[\Vm_0 \Vm_k \Vm_j]) \\
          &\overset{\mathrm{(b)}}=& \rank(\Tc(\hv_k))=N
\end{eqnarray*}
where (a) follows from orthogonality between $\Tc(\hv_k)$ and $\Vm_j$ for $j\neq k$, (b)
follows because $[\Vm_0\Vm_1\Vm_2]$ or $[\Vm_0\Vm_2\Vm_1]$ spans a complete $N+L$-dimensional space. These equations yield
$l_0+l_k\leq N$ for $k=1,2$. This establishes the achievability.

The converse follows by noticing that the constraints (\ref{peruserSDoF}) and (\ref{MIMOconst}) correspond to trivial upper bounds. To obtain (\ref{peruserSDoF}), we consider the special case when the transmitter sends only one confidential message to one of two receivers in the presence of the eavesdropper. When sending one confidential message to receiver 1, the two-user frequency-selective BCC reduces to the MIMO wiretap channel with the legitimate channel $\Tc(\hv_1)$ and the eavesdropper channel $\Tc(\hv_2)$, whose s.d.o.f. is upper bounded by $L$. The same bound holds for receiver 2 when transmitting one confidential message to receiver 2 in the presence the eavesdropper (receiver 1).
The upper bounds (\ref{MIMOconst}) follow because the total number of streams per receiver is limited by the individual $(N+L)\times N$ MIMO link. This establishes the converse.

\bibliographystyle{IEEEbib}
\bibliography{VandermondeWiretapJournalver7}

\begin{thebibliography}{10}

\bibitem{wyner1975wc}
A.~D Wyner,
\newblock ``{The Wiretap Channel},''
\newblock {\em Bell. Syst. Tech. J.}, vol. 54, 1975.

\bibitem{leungyancheong1978gwt}
S.~Leung-Yan-Cheong and M.~Hellman,
\newblock ``{The Gaussian wire-tap channel},''
\newblock {\em IEEE Transactions on Information Theory}, vol. 24, no. 4, pp.
  451--456, 1978.

\bibitem{csiszar1978bcc}
I.~Csisz\'ar and J.~K\"orner,
\newblock ``{Broadcast Channels with Confidential Messages},''
\newblock {\em IEEE Trans. Inform. Theory}, vol. 24, 1978.

\bibitem{gopala2006scf}
P.K. Gopala, L.~Lai, and H.~El~Gamal,
\newblock ``{On the secrecy capacity of fading channels},''
\newblock {\em IEEE Trans. on Inform. Theory}, vol. 54, no. 10, October 2008.

\bibitem{khisti2007mc}
A.~Khisti and G.~Wornell,
\newblock ``{The MIMOME Channel},''
\newblock {\em Arxiv preprint arXiv:0710.1325}, 2007.

\bibitem{liu2008scc}
T.~Liu, V.~Prabhakaran, and S.~Vishwanath,
\newblock ``{The secrecy capacity of a class of parallel Gaussian compound
  wiretap channels},''
\newblock in {\em Proc. IEEE International Symposium on Information Theory
  (ISIT)}, Toronto, Ontario, Canada, 2008, pp. 116--120.

\bibitem{liang2007sco}
Y.~Liang, H.~V. Poor, and S.~Shamai,
\newblock ``{Secure communication over fading channels},''
\newblock {\em IEEE Trans. on Inform. Theory}, vol. 54, no. 6, June.

\bibitem{negi2005scu}
R.~Negi and S.~Goel,
\newblock ``{Secret communication using artificial noise},''
\newblock {\em Vehicular Technology Conference, 2005. VTC-2005-Fall. 2005 IEEE
  62nd}, vol. 3, 2005.

\bibitem{parada2005scs}
P.~Parada and R.~Blahut,
\newblock ``{Secrecy capacity of SIMO and slow fading channels},''
\newblock in {\em Information Theory, 2005. ISIT 2005. Proceedings.
  International Symposium on}, 2005, pp. 2152--2155.

\bibitem{khisti2007stm}
A.~Khisti and G.~Wornell,
\newblock ``{Secure Transmission with Multiple Antennas: The MISOME Wiretap
  Channel},''
\newblock {\em Arxiv preprint arXiv:0708.4219}, 2007.

\bibitem{liu2007nsc}
T.~Liu and S.~Shamai,
\newblock ``{A Note on the Secrecy Capacity of the Multi-antenna Wiretap
  Channel},''
\newblock {\em Arxiv preprint arXiv:0710.4105}, 2007.

\bibitem{oggier2007scm}
F.~Oggier and B.~Hassibi,
\newblock ``{The Secrecy Capacity of the MIMO Wiretap Channel},''
\newblock {\em Arxiv preprint arXiv:0710.1920}, 2007.

\bibitem{shafiee2007arg}
S.~Shafiee and S.~Ulukus,
\newblock ``{Achievable rates in Gaussian MISO channels with secrecy
  constraints},''
\newblock in {\em Proc. Int. Symp. Inform. Theory}, 2007.

\bibitem{khisti:gmw}
A.~Khisti, G.~Wornell, A.~Wiesel, and Y.~Eldar,
\newblock ``{On the Gaussian MIMO wiretap channel},''
\newblock in {\em Proc. IEEE International Symposium on Information Theory
  (ISIT)}, Nice, France, 2007.

\bibitem{immse09}
Ronit Bustin, Ruoheng Liu, H.~Vincent Poor, and Shlomo~Shamai (Shitz),
\newblock ``{An MMSE Approach to the Secrecy Capacity of the MIMO Gaussian
  Wiretap Channel},''
\newblock {\em EURASIP,s pecial issue on Wireless Physical Security}, 2009.

\bibitem{Liangisita08}
H.~D. Ly, T.~Liu, and Y.~Liang,
\newblock ``{MIMO Broadcasting with Common, Private, and Confidential
  Messages},''
\newblock in {\em Proc. Inter. Symp. Inform. Theory and its App. (ISITA)}, New
  Zealand, December 2008.

\bibitem{liu2008dmi}
R.~Liu, I.~Maric, P.~Spasojevic, and R.~D. Yates,
\newblock ``{Discrete memoryless interference and broadcast channels with
  confidential messages: secrecy rate regions},''
\newblock {\em IEEE Transactions on Information Theory}, vol. 54, no. 6, pp.
  2493--2507, 2008.

\bibitem{LiuPoorIT09}
R.~Liu and H.~V. Poor,
\newblock ``Secrecy capacity region of a multi-antenna {G}aussian broadcast
  channel with confidential messages,''
\newblock {\em IEEE Trans. on Inform. Theory}, vol. 55, no. 3, pp. 1235--1249,
  March 2009.

\bibitem{LiuLiuPoorShamaiISIT09}
Ruoheng Liu, Tie Liu, H.~Vincent Poor, and Shlomo~Shamai (Shitz),
\newblock ``{MIMO Gaussian Broadcast Channels with Confidential Messages},''
\newblock in {\em IEEE Symposium on Information Theory (ISIT 2009)}, Seoul,
  Korea, June 28 - July 3 2009.

\bibitem{choo2008krb}
L.C. Choo and K.K. Wong,
\newblock ``{The K-Receiver Broadcast Channel with Confidential Messages},''
\newblock {\em submitted to "IEEE Trans. on Inform. Theory", 2008, arxiv
  preprint arXiv:0812.3873}, 2008.

\bibitem{khisti2008sbo}
A.~Khisti, A.~Tchamkerten, and GW~Wornell,
\newblock ``{Secure broadcasting over fading channels},''
\newblock {\em Information Theory, IEEE Transactions on}, vol. 54, no. 6, pp.
  2453--2469, 2008.

\bibitem{choo2009trb}
L.C. Choo and K.K. Wong,
\newblock ``{The Three-Receiver Broadcast Channel with Degraded Message Sets
  and Confidential Messages},''
\newblock {\em submitted to "IEEE Trans. on Inform. Theory", 2009, arxiv
  preprint arXiv: 0903.0548}.

\bibitem{ekrem2008scc}
E.~Ekrem and S.~Ulukus,
\newblock ``{Secrecy Capacity of a Class of Broadcast Channels with an
  Eavesdropper},''
\newblock {\em submitted to ``EURASIP Journal on Wireless Communications and
  Networking'', Arxiv preprint arXiv:0812.0319}.

\bibitem{ekrem2009scr}
E.~Ekrem and S.~Ulukus,
\newblock ``{The Secrecy Capacity Region of the Gaussian MIMO Multi-receiver
  Wiretap Channel},''
\newblock {\em submitted to "IEEE Trans. on Inform. Theory", 2009, Arxiv
  preprint arXiv:0903.3096}.

\bibitem{bagherikaram2008srr}
G.~Bagherikaram, A.S. Motahari, and A.K. Khandani,
\newblock ``{The Secrecy Rate Region of the Broadcast Channel},''
\newblock {\em Arxiv preprint arXiv:0806.4200}, 2008.

\bibitem{bagherikaram2009scr}
G.~Bagherikaram, A.S. Motahari, and A.K. Khandani,
\newblock ``{The Secrecy Capacity Region of the Gaussian MIMO Broadcast
  Channel},''
\newblock {\em Arxiv preprint arXiv:0903.3261}, 2009.

\bibitem{koyluoglu2008ias}
O.O. Koyluoglu, H.~El~Gamal, L.~Lai, and H.V. Poor,
\newblock ``{Interference Alignment for Secrecy},''
\newblock {\em Arxiv preprint arXiv:0810.1187}, 2008.

\bibitem{standard80211}
{\em {http://standards.ieee.org/getieee802/download/802.11-1999.pdf}},
\newblock ANSI/IEEE Std 802.11, 1999 Edition (R2003), 1999.

\bibitem{standard80216}
{\em {http://standards.ieee.org/getieee802/download/802.16e-2005.pdf}},
\newblock Air Interface for Fixed and Mobile Broadband Wireless Access Systems,
  2005.

\bibitem{standardLTE}
{\em {http://www.3gpp.org/Highlights/LTE/LTE.htm}}.

\bibitem{ryandebbah}
{\O}.~Ryan and M.~Debbah,
\newblock ``Asymptotic behaviour of random vandermonde matrices with entries on
  the unit circle,''
\newblock {\em IEEE Trans. on Inform. Theory}, 2009.

\bibitem{paper:sampaiokobayashi2}
L.~Sampaio, M.~Kobayashi, {\O}.~Ryan, and M.~Debbah,
\newblock ``{V}andermonde frequency division multiplexing for cognitive
  radio,''
\newblock {\em 9th IEEE Workshop on Signal Processing Advances for Wireless
  Communications, Recife, Brazil}, 2008.

\bibitem{scaglione2000lvm}
A.~Scaglione, GB~Giannakis, and S.~Barbarossa,
\newblock ``{Lagrange/Vandermonde MUI eliminating user codes
  forquasi-synchronous CDMA in unknown multipath},''
\newblock {\em IEEE Trans. on Signal Process.}, vol. 48, no. 7, pp. 2057--2073,
  2000.

\bibitem{poor1994isd}
H.V. Poor,
\newblock {\em {An Introduction to Signal Detection and Estimation}},
\newblock Springer, 1994.

\bibitem{viswanath2003scv}
P.~Viswanath and DNC Tse,
\newblock ``{Sum capacity of the vector Gaussian broadcast channel and
  uplink-downlink duality},''
\newblock {\em IEEE Transactions on Information Theory}, vol. 49, no. 8, pp.
  1912--1921, 2003.

\bibitem{lee2007hsa}
J.~Lee and N.~Jindal,
\newblock ``{High SNR Analysis for MIMO Broadcast Channels: Dirty Paper Coding
  Versus Linear Precoding},''
\newblock {\em Information Theory, IEEE Transactions on}, vol. 53, no. 12, pp.
  4787--4792, 2007.

\bibitem{viswanathan2003dce}
H.~Viswanathan, S.~Venkatesan, and H.~Huang,
\newblock ``{Downlink capacity evaluation of cellular networks with
  known-interference cancellation},''
\newblock {\em Selected Areas in Communications, IEEE Journal on}, vol. 21, no.
  5, pp. 802--811, 2003.

\bibitem{elgamal1981pms}
A.~El~Gamal and E.~Van~der Meulen,
\newblock ``{A proof of Marton's coding theorem for the discrete memoryless
  broadcast channel (Corresp.)},''
\newblock {\em IEEE Transactions on Information Theory}, vol. 27, no. 1, pp.
  120--122, 1981.

\bibitem{cover_book}
T.~Cover and J.~Thomas,
\newblock {\em Elements of information theory},
\newblock Wiley, New York, 1991.

\end{thebibliography}
\end{document}